\def\BibTeX{{\rm B\kern-.05em{\sc i\kern-.025em b}\kern-.08em
    T\kern-.1667em\lower.7ex\hbox{E}\kern-.125emX}}
\DeclareMathAlphabet{\mathpzc}{OT1}{pzc}{m}{it}
\newcommand{\ourproblem}{WCSD}
\newcommand{\ourindex}{WC-INDEX}
\newcommand{\rev}[1]{\textcolor[rgb]{0,0,0}{#1}}
\newcommand{\ours}{WC-INDEX}
\newcommand{\ourp}{WC-INDEX+}
\setlist{nolistsep}
\g@addto@macro\normalsize{%
\setlength\abovedisplayskip{-1pt}
\setlength\belowdisplayskip{0pt}
\setlength\abovedisplayshortskip{-1pt}
\setlength\belowdisplayshortskip{0pt}
}
\newtheorem{definition}{Definition}
\newtheorem{lemma}{Lemma}
\newtheorem{theorem}{Theorem}
\newtheorem{example}{Example}
\newtheorem{observation}{Observation}
\newtheorem{proof}{\textbf{Proof}}
\long\def\comment#1{}
\newcommand{\stitle}[1]{\vspace{1ex}\noindent{{\bf#1}}}
\newcommand{\refalg}[1]{Algorithm~\ref{alg:#1}}
\newcommand{\topcaption}{%
	\setlength{\abovecaptionskip}{0.01cm}%
	\setlength{\belowcaptionskip}{0.01cm}%
	\caption}
\begin{document}
\title{Efficiently Answering Quality Constrained Shortest Distance Queries in Large Graphs\\
\thanks{Xuemin Lin is the corresponding author.}}


\author{{You Peng$^{\dagger}$, Zhuo Ma$^{\dagger}$, Wenjie Zhang$^{\dagger}$, Xuemin Lin$^{\ddagger}$, Ying Zhang$^{\S}$, Xiaoshuang Chen$^{\dagger\dagger}$
} %
\vspace{1.6mm}\\
\fontsize{10}{10}
\selectfont\itshape
$^\dagger$The University Of New South Wales, \\
$^{\ddagger}$Antai College of Economics \& Management, Shanghai Jiao Tong University, Shanghai, China,\\
$^\S$QCIS, University of Technology, Sydney\\
$^{\dagger\dagger}$Data Principles (Beijing) Technology Co., Ltd.\\
\fontsize{9}{9} \selectfont\ttfamily\upshape
unswpy@gmail.com, zhuo.ma@student.unsw.edu.au, wenjie.zhang@unsw.edu.au, \\
xuemin.lin@sjtu.edu.cn, Ying.Zhang@uts.edu.au, xiaoshuang.chen@enmotech.com\\}

\maketitle

\begin{abstract}
The shortest-path distance is a fundamental concept in graph analytics and has been extensively studied in the literature. In many real-world applications, quality constraints are naturally associated with edges in the graphs and finding the shortest distance between two vertices $s$ and $t$ along only valid edges (i.e., edges that satisfy a given quality constraint) is also critical. In this paper, we investigate this novel and important problem of quality constraint shortest distance queries. We propose an efficient index structure based on 2-hop labeling approaches. Supported by a path dominance relationship incorporating both quality and length information, we demonstrate the minimal property of the new index. An efficient query processing algorithm is also developed. Extensive experimental studies over real-life datasets demonstrates efficiency and effectiveness of our techniques.

\end{abstract}

\begin{IEEEkeywords}
Constrained Shortest Distance, Path Queries, Graph Data Management
\end{IEEEkeywords}



\section{Introduction}
\label{sect:intro}
Shortest-path distance is a critical concept in graph analytics~\cite{10.1145/3448016.3457237,jin2021fast,peng2021efficient,peng2021dlq,peng2021answering}. Specially, a path between two vertices $s$ and $t$ is a shortest path if its length is the shortest among all paths between $s$ and $t$. The distance of the shortest path is called the shortest-path distance, or shortest distance for short. Due to its optimality, the notion of the shortest distance has been exploited to tackle a broad range of problems, including keyword search~\cite{he2007blinks,jiang2015exact,tao2011nearest}, betweenness centrality~\cite{brandes2001faster,puzis2007fast} and route planning~\cite{abraham2011hub,abraham2012hierarchical,peng2020answering,qiu2018real,peng2018efficient,peng2019towards,lai2021pefp}. 
The shortest distance between two vertices $s$ and $t$ can reflect the vertices' significance. For instance, i) in the nearest keyword search, the vertices closest to the query source are favoured~\cite{jiang2015exact}; and ii) in social networks, distances are employed in the search ranking to aid users in identifying the most relevant results~\cite{vieira2007efficient}.

Many real-world networks~\cite{DBLP:journals/pvldb/HaoYZ21,DBLP:conf/adc/LiHYCZ22} naturally impose a quality constraint over edges. For instance, in a road network, road segments may specify the weight limits permitted for auto-trucks. In this scenario, the weight limit is the quality of edges on road networks, and it is demanded to compute the shortest path by which an auto-truck can pass. Namely, compute the shortest path and the distance along which the auto-truck satisfies the quality constraint of each edge. This motivates us to formulate the quality constraint shortest distance problem: given query vertices $s$ and $t$ in a graph $G$ and a quality constraint $w$, the quality constraint shortest distance problem finds the shortest-path distance where the quality of each edge along the path is at least $w$. Note that while we present the techniques focusing on the shortest distance computation, we will show in Section \ref{sect:ext} that our method can also easily support quality constraint shortest path queries. 

\begin{figure}[htbp]
\centering
\includegraphics[scale=0.35]{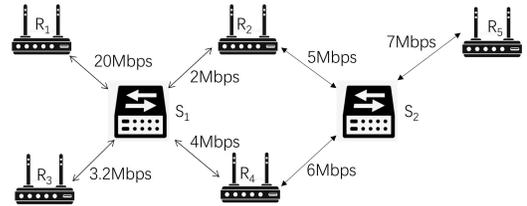}
\caption{A Communication Network. $R_i$ indicates the $i^{th}$ router, and $S_i$ indicates the $i^{th}$ switch.}
\label{fig:motivation_wcspd}
\end{figure}

\stitle{Applications.} Below we introduce some motivating applications. 

\noindent(1) \underline{\textit{Communication Networks~\cite{qiao2013computing}}}.~To achieve end-to-end Quality-of-Service (QoS) guarantees~\cite{ma1997path}, the transmission of multimedia streams imposes a minimum-bandwidth requirement on all the links of a path. A quality constrained shortest distance query can determine the distance (for the consideration of minimum cost or delay) between two nodes in a network, where each edge/link has a minimum bandwidth demand of $w$. The resultant path can handle $w$ bits per second for the transmission of a stream, such as audio or video, with guaranteed bandwidth. Figure~\ref{fig:motivation_wcspd} illustrates a motivating example. Given a minimum speed guarantee such as $3$ Mbps, a query asks for the distance from $R_3$ to $R_2$ with such a speed guarantee. In this case, the resultant distance is $4$ since $R_3 \rightarrow S_1 \rightarrow R_4 \rightarrow S_2 \rightarrow R_2$ fulfills all the criteria, while $R_3 \rightarrow S_1 \rightarrow R_2$ does not owing to the speed of $S_1 \rightarrow R_2 = 2$Mbps $<$ $3$Mbps.

\noindent(2) \underline{\textit{Social Networks~\cite{vieira2007efficient,qing2022towards,yuan2022efficient,peng2022finding,chen2022answering}}}.~In social networks, determining the closeness of two individuals is a critical issue. A popular metric is the distance between them in the social networks, e.g., a 2-hop friendship connection is stronger than a 3-hop one. The strength of connections between users is indicated based on profile similarity and interaction activity~\cite{xiang2010modeling,zhao2012relationship} and the distance between users needs to incorporate such strength information. To support this, a distance query with the quality constraint identifies the distance with only strong connections.

\noindent(3) \underline{\textit{Biology Networks~\cite{leser2005query}}}. Pathway queries are vital in the analysis of biological networks, where vertices represent the entities, e.g., enzymes and genes, while edges reflect interactions or relations~\cite{leser2005query}. As shown in~\cite{leser2005query}, one of the four important pathway queries in biological networks is to identify a shortest path between two substances subject to certain constraints. The quality can derive from the activity of kinase~\cite{kitagawa2013activity,biosa2013gtpase}. A frequently issued query in these biology networks is to determine the shortest pathway from substance $u$ to transfer to substance $v$, where all the activities of kinase on this pathway is as least $w$.

\stitle{Challenges.}~\rev{In real applications, the quality constraint shortest distance queries can be issued frequently over large-scale graphs. It requires both real-time response time and scalability}. An online BFS-based search needs to traverse the graph for given query vertices $s$ and $t$, making it impractical for real scenarios where real-time responses are demanded. 2-hop labelling approaches are shown to be efficient to support distance queries. Nevertheless, to deal with the constraints on edge qualities, a na\"ive adaption of 2-hop labelling solution involves constructing an index for every possible quality value $w$ among all edges of the graph. Such a solution is infeasible since the number of distinct $w$ values can be large. To overcome these challenges, a modified 2-hop labeling index is designed for the quality constrained distance problem to fill this research gap. To further accelerate index construction and query processing, and to reduce index size, this paper investigates various pruning methods, proposes a query-efficient approach, and develops efficient vertex ordering strategies. \rev{Since these applications deserve both real-time response time (the online method could not satisfy) and scalability (the existing index-based method could not satisfy), our proposed method could cope with all these two challenges.}

\noindent \rev{\textbf{Novelty}.~Our approach incorporates an extension to the 2-hop index that takes advantage of the ordering of weights and distance, and significantly prunes vertices that would have been processed in a classical 2-hop index while maintaining index minimality. We also investigated the ordering of BFS searches and discovered that using vertex degree or tree decomposition can have different effects on different kinds of networks.}

\stitle{Our Approach.}~To efficiently answer the quality constrained shortest distance problem, this work develops a modified 2-hop labeling based index which possesses \textit{soundness}, \textit{completeness}, and \textit{minimal} properties. We investigate the BFS search orders in building the index, and propose a quality- and distance-priority constrained BFS to naturally meet the three properties without incurring additional costs. The query operation over the index is used not only in answering the quality constraint distance queries but also in the index construction phase. We carefully design the query function and achieve linear time complexity by utilizing a nice dominance property of the problem. Last, a hybrid vertex ordering is proposed to tackle both graphs with small and non-small treewidth. 

\stitle{Contributions.}~Our principal contributions are as follows:
\begin{itemize}
    \item \textit{Theoretical Analysis}.~First, the quality constrained distance problem is defined, which has a variety of applications in road networks, social networks, and biological networks. This paper theoretically analyzes the time and space complexity of this problem. In addition, it investigates the \textit{soundness}, \textit{completeness}, and \textit{minimal} properties, and proposes a sophisticated index capable of naturally preserving these three desirable features.
    \item \textit{Efficient Index}.~We propose a $2$-hop labeling based index method. Both query-efficient method and distance-prioritized traversal strategy are presented to expedite index construction. With a nice property of this problem, the query function could be implemented in linear time, which could accelerate both query time and indexing time.
    Additionally, a hybrid vertex ordering is investigated. In addition, we investigate how to simply modify our index to support the quality constraint shortest path problem.
    \item \textit{Comprehensive Experiments.}~Compared to the baselines, our comprehensive experiments demonstrate the efficiency and effectiveness of our proposed method. 
\end{itemize}

\vspace{1mm}
\noindent
{\bf Roadmap.} The rest of the paper is organized as follows. Section~\ref{sect:pre} introduces some preliminaries and Section~\ref{sect:baselines} introduces baseline solutions. Our 2-hop labeling based method is proposed in Section~\ref{sect:index}. Section~\ref{sect:ext} investigates some extension cases, followed by empirical studies in Section~\ref{sect:exp}. Section~\ref{sect:related} surveys important related work. Section~\ref{sect:conclusion} concludes the paper.

\section{Preliminaries}
\label{sect:pre}

\subsection{Problem Definition}
Quality (\underline{$w$}) \underline{C}onstrained \underline{S}hortest \underline{D}istance (\ourproblem) is defined over an undirected unweighted graph $G(V,E, \Delta, \delta)$, where $V(G)$ denotes the set of vertices, $E(G)$ denotes the set of edges, $\Delta \subset \mathbb{R}$ is a set of real-valued qualities, and $\delta: E(G) \rightarrow \Delta$ is a function that assigns each edge $e \in E(G)$ to a real-valued quality $w \in \Delta$. For each vertex $u \in V(G)$, $N_G(u)=\{v|(u,v) \in E(G)\}$ denotes the set of neighbors of $u$, and $deg_G(u)$ denotes the degree of $u$, i.e., $deg_G(u) = |N_G(u)|$. A path $p$ from the vertex $s \in V(G)$ to the vertex $t \in V(G)$ is a sequence of vertices $\langle v_0 \rightarrow v_1 \rightarrow \cdots \rightarrow v_k\rangle$ such that $s=v_0$, $t=v_k$ and $(v_{i-1},v_{i})$ is an edge that belongs to $E(G)$ for $\forall i \in [k]$. The length of $p$, denoted by $len(p)$, is the number of edges included in the path $p$, i.e., $len(p)=k$. A path between $s$ and $t$ is the shortest if its length is no larger than any other path between $s$ and $t$, and the distance between $s$ and $t$ in $G$, denoted by $dist_G(s,t)$, is defined as the length of the shortest path between $s$ and $t$. \rev{Table \ref{tab:notation} provides a summary of the notations used in this paper.}

\begin{definition}
(\textsc{$w$-Path}) Given a graph $G$ and a threshold $w$, a $w$-path, denoted by $p_{w}$, is a path in $G$ such that each of its edge has a quality not smaller than $w$, i.e., $\forall e\in p_{w}$, $\delta(e) \geq w$. 
\end{definition}

\begin{definition}
(\textsc{$w$-Constrained Distance}) Given two vertices $s$ and $t$ in a graph $G$, and a threshold $w$, the $w$-constrained distance between $s$ and $t$, denoted by $dist^w_G(s,t)$, is the minimum length among all the $w$-paths between $s$ and $t$.
\end{definition}

For simplify, this paper focuses on the distance first. Once the distance is found, the quality constrained shortest path can be easily located, and this extension will be discussed in Section~\ref{sect:ext}.

\begin{definition}[\ourproblem]
Given two vertices $s$ and $t$ in a graph $G$, and a real-valued threshold $w$, the \ourproblem \ problem is to answer the $w$-constrained distance query, i.e., computing the $w$-constrained distance between $s$ and $t$.
\end{definition}

\begin{figure}[t]
\centering
\includegraphics[scale=0.4]{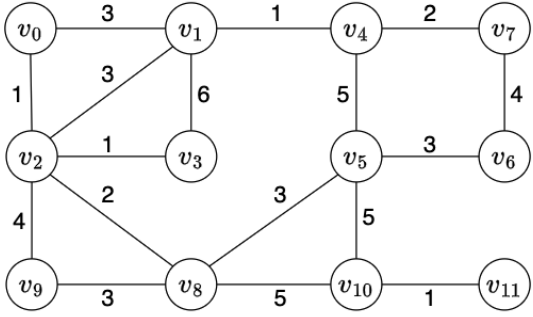}
\caption{An example graph. The values besides edges are their qualities.}
\label{fig:running_example_large}
\end{figure}

\begin{example}
Figure \ref{fig:running_example_large} depicts a weighted undirected graph, with the quality of each edge denoted by the number adjacent to it. In this example, a 1-constrained path between $v_0$ and $v_8$ is $\{ v_0 \rightarrow v_2 \rightarrow v_8\}$ since each edge on the path has a quality no less than $1$. It is also the shortest 1-constrained path between $v_0$ and $v_8$, therefore $dist_{1}(v_0,v_8)=2$. However, $\{v_0 \rightarrow v_2 \rightarrow v_8\}$ is not a 2-constrained path, since the edge $(v_0,v_2)$ has a quality less than 2. Alternatively, $\{v_0 \rightarrow v_1 \rightarrow v_2 \rightarrow v_8\}$ is the shortest 2-constrained path between $v_0$ and $v_8$, therefore $dist_{2}(v_0,v_8)=3$. For vertices $v_1$ and $v_4$, the path $\{v_1 \rightarrow v_2 \rightarrow v_9 \rightarrow v_8 \rightarrow v_5 \rightarrow v_4\}$ is both a 2-constrained path and a 3-constrained path. However, it is not the shortest 2-constrained path as $\{v_1 \rightarrow v_2 \rightarrow v_8 \rightarrow v_5 \rightarrow v_4\}$ also meets the constraint and has a shorter length. 
\end{example}

\subsection{2-Hop Labeling Framework}
\noindent\underline{\textit{Hub Labeling for Distance Queries}}.~Hub labeling~\cite{cohen2003reachability} is a vital category of algorithms for distance evaluation. In this class, a label $L(v)$ is computed for each vertex $v$ s.t. the distance between two vertices $s$ and $t$ can be obtained by inspecting $L(s)$ and $L(t)$ only, without traversing the graph. It is NP-hard to generate a labeling with the minimum size~\cite{cohen2003reachability}. Efficient hub labeling for road networks is explored in~\cite{abraham2011hub,abraham2012hierarchical}.~\cite{akiba2014fast} presents a labeling scheme that instead employs paths as hubs. Under the assumption of small treewidth and bounded tree height, ~\cite{ouyang2018hierarchy} proposed a scheme combining both hub labeling and hierarchy for road networks. Pruned landmark labeling (PLL)~\cite{akiba2013fast} is the state-of-the-art for real graphs, and its various extensions have been devised. For instance,~\cite{jiang2014hop} proposed an external algorithm that generates the same set of labels; ~\cite{li2019scaling} devised a parallel algorithm; and~\cite{akiba2014dynamic} describes an algorithm that updates the labels as new edges are inserted into the graph. ~\cite{li2017experimental} conducted an experimental study on hub labeling for distance queries.

\begin{table}[htb]
\centering
\caption{Summary of Notations}
\vspace{-2mm}
\begin{tabular}{c|l}
\noalign{\hrule height 1pt}
Notation & Definition \\ 
\noalign{\hrule height 0.6pt}
$G$ & a weighted unordered graph \\
$V(G)$ & a set of vertices\\
$E(G)$ & a set of edges\\
$\Delta$ & a set of real-valued qualities\\
$\delta$ & a function that assigns edges to a quality\\
$d$ & a distance value\\
$w$ & a quality value\\
$|w|$ & number of distinct values of edge qualities\\
$N_G(u)$ & neighbour set of $u$\\
$p$ & a path\\
$p_w$ & a $w$ path\\
$len(p)$ & number of edges included in path $p$\\
$dist_G(s,t)$ & distance between $s$ and $t$ in $G$\\
$dist^w_G(s,t)$ &  $w$-distance between $s$ and $t$ in $G$\\
$\mathcal{L}$ & WC-INDEX of $G$\\
$\mathcal{L}(u)$ & label set of $u$ in $\mathcal{L}$\\
$I$ & an index entry in the form of $(v,d,w)$\\
$R(u)$ & vector that records the current largest $w$ \\
       &of all paths from $v$ to all other vertices\\
\noalign{\hrule height 1pt}
\end{tabular}
\label{tab:notation}
\end{table}
\section{Baseline Solutions}
\label{sect:baselines}

\subsection{Basic Online and Indexing Approaches}
\noindent\underline{\textit{BFS-based Online Approaches}}.~A na\"ive online approach is to conduct a constrained breadth first search, which filters out-edges with quality values lower than the constraint $w$. The time and space complexity is both $O(|V|+|E|)$. Alternative algorithms such as Dijktra can also be performed. Another solution is to partition the original graph according to the values of quality, then it can perform constrained BFS on the corresponding partition. On large graphs, none of these online approaches is efficient in terms of query time. These algorithms are evaluated as baselines in the experiments.

\noindent\textbf{Details}.~Algorithm~\ref{alg:wc-bfs} depicts the BFS-based online computation. Line~\ref{wc_bfs:init} initializes the arrays $dis$ and $visited$. Line~\ref{wc_bfs:queue} initializes the search queue with the vertex $s$, and set $visited[s] = true$. Lines~\ref{wc_bfs:while1} to~\ref{wc_bfs:while2} constitute the search procedure. In each iteration, $size$ is set as the current size of $P$ and $dis = dis + 1$ in Line~\ref{wc_bfs:size}. Then, all the vertices are traversed in Line~\ref{wc_bfs:traver1} according the to vertex order. For vertex $u$ in $P$, each vertex $v$ in its neighbors is explored, and it will be pruned in Line~\ref{wc_bfs:prune1} if $e(u, v) < w$ or $visited[v]$. Line~\ref{wc_bfs:end} returns the $dis$ if explores $t$. Otherwise, $w$ is added into the $P$ and set $visited[v] = true$ to prevent duplication of candidates. $INF$ is returned in Line~\ref{wc_bfs:false} if not reach $t$.

\begin{algorithm}[thb]
	\SetAlgoVlined
	\SetFuncSty{textsf}
	\SetArgSty{textsf}
	\small
	\topcaption{WC-BFS} \label{alg:wc-bfs}
	\Input {any two vertices $s,t \in V$, and constraint $w$;}
	\Output {$dist^{w}$ between $s$ and $t$}
	\State{$dis = 0$, $\forall v \in V, visited[v] = false$}
	\label{wc_bfs:init}
    \State{$P.push((s))$, $visited[s] = true$;} \\
    \label{wc_bfs:queue}
    \While{$P \neq \emptyset$}{
    \label{wc_bfs:while1}
        \State{$size \leftarrow P.size()$, $dis = dis + 1$;} \\
        \label{wc_bfs:size}
        \For{$\forall i \in \{1,..,size\}$ }{
        \label{wc_bfs:traver1}
            \State{$(u) \leftarrow P.pop()$;} 
            \label{wcbfs:state1}\\
            \For{$\forall v \in adj[u]$}{
                \If{$e(u, v) < w$ or $visited[v]$}{
                \label{wc_bfs:prune1}
                \State{Continue;}\\
                }
                \If{$v == t$}{
                \label{wc_bfs:end}
                \State{Return $dis$;}\\
                }
                \State{$P.push(v)$, $visited[v] = true$;}\\
                \label{wc_bfs:add}
            }
        }
    }
    \label{wc_bfs:while2}
	\State{\Return $INF$;}\\
	\label{wc_bfs:false}
\end{algorithm}

\noindent\textit{{2-hop Labeling Approach}}.~2 hop-labeling approaches have been proven to be effective for addressing shortest path problems~\cite{ouyang2018hierarchy,akiba2013fast}. 2-hop indexes store the distance between vertices that has been precomputed. Each vertex $u \in V$ has its own label consisting of the form $(v, d)$, where $v$ is another vertex in the graph and $d$ is the distance from $u$ to $v$. To obtain the distance from vertex $s$ to vertex $t$, common vertices are identified in the labels of both $s$ and $t$, calculate the distance as the sum of their respective distance to $w$, and return the minimum sum of distances. 

\noindent\underline{\textit{Na\"ive 2-hop labeling method for \ourproblem}}.~A na\"ive 2-hop labeling solution involves filtering the graph based on edge qualities and constructing a classical 2-hop labeling index for each filtered graph. $|w|$ is used to denote the number of distinct values of edge qualities. Thus, $|w|$ 2-hop indices will be constructed, each containing only edges satisfying $\forall e \in E, \delta(e) \geq w$. Given a query $(s, t, w_0)$, it can be answered by using the classical 2-hop labeling method by a simple set intersection operation for the corresponding index with $w_0$. This approach becomes infeasible since the space required to store all of the indices grows as the graph sizes and $|w|$ increase. Moreover, in some instances, e.g., communication networks, the edge qualities are not integers. In such a scenario, it is impossible to create the na\"ive 2-hop labeling for every possible value of $w$. The time complexity of the na\"ive method is $O( |V| \cdot (|V| + |E|) \cdot |w| )$. The space complexity is $O(|V| \cdot |V| \cdot |w|)$ since the number of induced graphs is $|w|$, and in each of them, every vertex could store $|V|$ label entries in the worst case.



\section{Index Construction}
\label{sect:index}

\subsection{Our proposed 2-hop Labeling Index-based Approach}
The na\"ive approach can answer queries efficiently. Nevertheless, it needs to construct $|w|$ distinct 2-hop indices, which is inefficient and space-consuming. It becomes prohibitive to construct and maintain such a vast number of indices when $|w|$ is large. 

\begin{observation}
By building $|w|$ 2-hop labeling indexes, one may notice that numerous entries in the separate indices are redundant and elimination of those redundant entries does not impair the correctness of the queries. In this section, a modified 2-hop indexing approach is proposed which seeks to build only one index while efficiently supporting quality constrained shortest distance queries.
\end{observation}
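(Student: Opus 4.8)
The substantive, provable content of this observation is twofold: (i) that deleting a ``redundant'' entry never changes the answer to any query $(s,t,w_0)$, and (ii) that after deletion the surviving entries per hub number far fewer than $|w|$. The plan is to formalize redundancy through a \emph{path dominance} relation and then verify both claims. First I would establish the monotonicity that drives everything: for any fixed pair $(u,v)$, the constrained distance $dist^w_G(u,v)$ is non-decreasing in $w$. This is immediate, since a $w$-path is also a $w'$-path whenever $w'\le w$ (an edge quality that is $\ge w$ is also $\ge w'$), so the admissible paths only shrink as $w$ grows. Writing $G_w$ for the subgraph induced by edges of quality at least $w$, we have $G_{w_2}\subseteq G_{w_1}$ whenever $w_1\le w_2$, and distances can only increase on a subgraph. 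Hence for each pair $(u,v)$ the map $w\mapsto dist^w_G(u,v)$ is a non-decreasing step function taking at most $|V|-1$ distinct finite values, even though it is nominally defined over all $|w|$ thresholds.

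Next I would introduce the dominance relation on index entries: an entry $(v,d,w)$ \emph{dominates} $(v,d',w')$ within the same label $\mathcal{L}(u)$, referring to the same hub $v$, if $d\le d'$ and $w\ge w'$ (a path that is no longer and whose quality constraint is no weaker). By the monotonicity above, for a fixed $(u,v)$ the merged collection $\{(v,\,dist^w_G(u,v),\,w)\}$ has the property that whenever two thresholds yield the same distance, the entry with the larger threshold dominates the other; so the surviving, non-dominated entries are exactly one per distinct distance value, namely $(v,d,w_{\max}(d))$ with $w_{\max}(d)=\max\{w: dist^w_G(u,v)=d\}$. These survivors are precisely the Pareto frontier of the step function (both $d$ and $w$ increase together along it), which gives claim (ii): their number is bounded by the count of distinct distances rather than by $|w|$.

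To prove claim (i) I would argue first at the level of a single hub. For a query threshold $w_0$, the only entries of $\mathcal{L}(u)$ at hub $v$ that can be used are those with $w\ge w_0$, and among them only the one of smallest $d$ matters. I would show this smallest-$d$ admissible entry coincides with the survivor $(v,d_{w_0},w_{\max}(d_{w_0}))$ where $d_{w_0}=dist^{w_0}_G(u,v)$: it satisfies $w_{\max}(d_{w_0})\ge w_0$ and is never dominated, since any dominator would exhibit an admissible path shorter than $dist^{w_0}_G(u,v)$, a contradiction. Thus the best reachable length to $v$ subject to $w\ge w_0$ is identical before and after pruning. Lifting this to the 2-hop query, a combined path through a common hub $v$ uses $(v,d_1,w_1)\in\mathcal{L}(s)$ and $(v,d_2,w_2)\in\mathcal{L}(t)$, certifying length $d_1+d_2$ and quality $\min(w_1,w_2)$, hence admissible for $w_0$ exactly when both $w_1\ge w_0$ and $w_2\ge w_0$. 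Because the quality constraint \emph{decouples} into the two independent conditions $w_1\ge w_0$ and $w_2\ge w_0$, the length-minimizing admissible entry on each side is retained by the per-hub argument, so $\min_v(d_1+d_2)$ over admissible entries is unchanged. Assuming the unpruned merged labeling is complete (every constrained shortest path is witnessed by some common hub), pruning therefore returns the same $dist^{w_0}_G(s,t)$ for all $(s,t,w_0)$.

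I expect the main obstacle to be this lifting step: one must justify that optimizing $d_1$ and $d_2$ \emph{separately} on the two sides of the hub still produces a globally optimal admissible combination under the joint quality requirement $\min(w_1,w_2)\ge w_0$. The decoupling of the quality constraint makes the two optimizations genuinely independent, but establishing this decoupling cleanly — and confirming that discarding dominated entries cannot break the completeness assumed for the merged index — is the delicate point that the formal proof must pin down.
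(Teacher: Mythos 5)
Your proposal is correct and follows essentially the same route the paper takes when it formalizes this observation later in Section~\ref{sect:index}: your dominance relation on entries is the paper's path-dominance (Definition~\ref{def:dom}) and minimality property, your per-hub Pareto frontier is exactly the paper's Theorem~\ref{theo:increase} (entries at a common hub are jointly monotone in $d$ and $w$), and your decoupling of the constraint into $w_1\ge w_0$ and $w_2\ge w_0$ on the two sides is what the paper's query-efficient implementation exploits. The only cosmetic difference is that you derive the monotone chain from subgraph inclusion $G_{w_2}\subseteq G_{w_1}$, whereas the paper obtains it as a consequence of the pruning in its construction algorithm (Lemma~\ref{lemma:dom}); both yield the same conclusion.
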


Before providing this approach, the concept of path dominance is firstly described.

\begin{definition}
\label{def:dom}
(\textsc{Path Dominance}) Given two vertices $s$ and $t$ in a graph $G$, as well as two $w$-paths from $s$ to $t$, i.e., $p_{w_1}$ and $p_{w_2}$, $p_{w_1}$ dominates $p_{w_2}$ if $len(p_{w_1}) \leq len(p_{w_2})$ and $w_1 \geq w_2$.
\end{definition}
\smallskip

\begin{definition}
(\textsc{Minimal Path}) A $w$-path $p_{w}$ is a \textit{minimal} path if it cannot be dominated by any other $w$-path.
\end{definition}
\smallskip

\begin{example}
An example is illustrated in~Figure~\ref{fig:running_example}. For paths between vertices $v_0$ and $v_4$, path $\{v_0 \rightarrow v_3 \rightarrow v_4\}$ with length $2$ dominates path $\{v_0 \rightarrow v_3 \rightarrow v_5 \rightarrow v_4\}$ with length $3$, since the two paths have the same minimum edge quality of $1$ and the length of $\{v_0 \rightarrow v_3 \rightarrow v_4\}$ is smaller. For paths between vertices $v_1$ and $v_3$, $\{v_1 \rightarrow v_2 \rightarrow v_3\}$ with a minimum edge quality of $4$ dominates $\{v_1 \rightarrow v_0 \rightarrow v_3\}$ with a minimum edge quality of $1$, while both have the same length of $2$. Likewise, $\{v_1 \rightarrow v_3\}$ dominates $\{v_1 \rightarrow v_0 \rightarrow v_3\}$ due to both length and minimum edge quality. Path $\{v_0 \rightarrow v_3 \rightarrow v_4\}$ is the minimal $1$-path between $v_0$ and $v_4$, because it cannot be dominated by any other paths. Also, $\{v_1 \rightarrow v_2 \rightarrow v_3\}$ is both the minimal $3$-path and minimal $4$-path between $v_1$ and $v_3$.

\end{example}

In this paper, the dominance relationship between paths is leveraged and a single compact 2-hop index is generated, which is capable of answering queries regarding arbitrary quality constraint $w$. The \ourindex \ index is defined as follows:

\begin{definition}
(\textsc{\ourindex}) Given an undirected weighted graph $G$, a \ourindex \  $\mathcal{L}$ of $G$ assigns a label set $\mathcal{L}(u)$ to each vertex $u \in V(G)$. An index entry $(v, dist^{\bar{w}}_G(u,v), \bar{w}) \in \mathcal{L}(u)$ indicates that there exists a minimal $\bar{w}$-path between $u$ and $v$, and it also records the corresponding $\bar{w}$-constrained distance $dist^{\bar{w}}_G(u,v)$ between them.
\end{definition}


\begin{table}[thb]
    \centering
    \small
    \topcaption{WC-INDEX of Figure \ref{fig:running_example}} \label{tab:summary_of_social_networks}
    \scalebox{0.90}{
    \begin{tabular}{|l|l|} \hline
        \textbf{Vertex} & \textbf{$L(\cdot)$} \\ \hline
        $v_0$ & $(v_0,0,\infty)$\\ \hline
        $v_1$ & $(v_0,1,3),(v_1,0,\infty)$ \\ \hline
        $v_2$ & $(v_0,2,3),(v_1,1,5),(v_2,0,\infty)$ \\ \hline
        $v_3$ & \makecell[l]{$(v_0,1,1),(v_0,2,2),(v_0,3,3),(v_1,1,2),(v_1,2,4),(v_2,1,4),$ \\ $(v_3,0,\infty)$}\\ \hline
        $v_4$ & \makecell[l]{$(v_0,2,1),(v_0,3,2),(v_0,4,3),(v_1,2,2),(v_1,3,4),(v_2,2,4),$ \\ $(v_3,1,4),(v_4,0,\infty)$}\\ \hline
        $v_5$ & \makecell[l]{$(v_0,2,1),(v_0,3,2),(v_0,5,3),(v_1,2,2),(v_1,4,3),(v_2,2,2),$ \\ $(v_2,3,3),(v_3,1,2),(v_3,2,3),(v_4,1,3),(v_5,0,\infty)$}\\ \hline
    \end{tabular}}
\end{table}

\stitle{Query processing with \ourindex}. Given a complete WC-INDEX $\mathcal{L}$ of $G$, for any two vertices $s,t \in V(G)$ and an arbitrary real-value $w \in \mathbb{R}$, query $Q(s,t,w)$ computes the $w$-constrained distance between $s$ and $t$ as:

\begin{equation}
    dist^w_G (s,t)= \min_{\begin{subarray}{c}u \in \mathcal{L}(s)\cap \mathcal{L}(t) \\ w_1, w_2 \geq w \end{subarray}} dist^{w_1}_G(s,u) + dist^{w_2}_G(u,t) 
\end{equation}



\begin{algorithm}[thb]
	\SetAlgoVlined
	\SetFuncSty{textsf}
	\SetArgSty{textsf}
	\small
	\topcaption{Query Algorithm} \label{alg:query}
	\Input {any two vertices $s,t \in V$, and constraint $w$;}
	\Output {$dist^{w}$ between $s$ and $t$}
	\State{$dist^{w} \leftarrow \infty$;} \\
	\For {every index entry $I_i$ in $L(s)$} {
	    \If {$I_i.quality \geq w$} {
	        \For {every index entry $I_j$ in $L(t)$ such that $I_j.vertex = I_i.vertex$} {
	            \If {$I_j.quality \geq w$} {
	                \If {$I_i.dist + I_j.dist < dist^{w}$} {
	                    $ dist^{w} \leftarrow I_i.dist + I_j.dist$
	                }
	            }
	        }
	    }
	}
	\State{\Return $dist^{w}$;}\\ 
\end{algorithm}
\smallskip

\begin{figure}[t]
\centering
\includegraphics[scale=0.35]{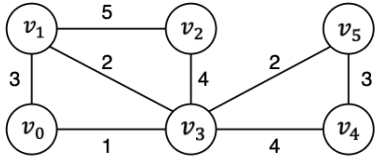}
\caption{A running example.}
\label{fig:running_example}
\end{figure}

\begin{example}
Figure~\ref{fig:running_example} illustrates how these 2-hop labeling index works. Given a query $Q(v_2,v_5,2)$, $L(v_2)$ and $L(v_5)$ are explored. This example starts with the first entry of $L(v_2)$, $(v_0,2,3)$, and discover that it satisfies the quality constraint of $2$. In the following, entries in $L(v_5)$ are explored that share the same vertex $v_0$ and also satisfy the quality constraint. $(v_0,3,2)$ is the first constraint-satisfying entry in $L(v_5)$. Therefore, $dist^2=2+3=5$ is obtained. The next entry $(v_0,4,3)$ also satisfies the constraint. Nevertheless, since the resultant distance $dist^2=2+4=6$ is larger than the previous distance obtained, no update is performed and $dist^2$ remains as $5$. It then moves on to the second entry of $L(v_2)$ which satisfies the constraint: $(v_1,1,5)$. In $L(v_5)$, label entries $(v_1,2,2)$ and $(v_1,4,3)$ are found satisfactory and subsequently update the distance as $dist^3=1+2=3$. Lastly, it visits $(v_2,0,\infty)$ in $L(v_2)$ and finds $(v_2,2,2)$ in $L(v_5)$, resulting in $dist^2=0+2=2$.
\end{example}

\begin{figure*}[htbp]
\subfloat[The $1^{st}$ iteration.\label{sfig:testa}]{%
  \includegraphics[width=.3\linewidth]{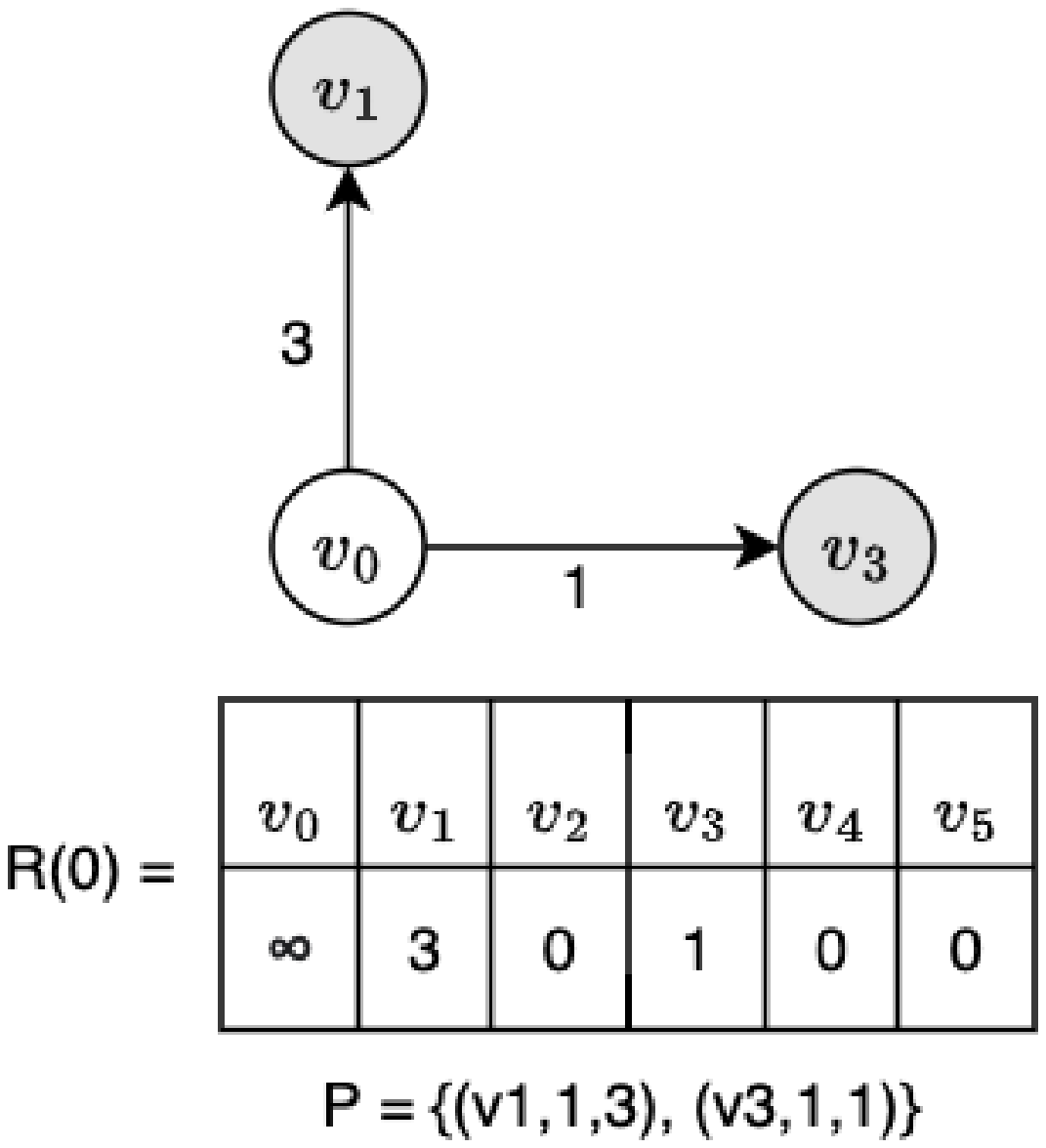}%
  \label{fig:bfs_a}
}\hfill
\subfloat[The $2^{nd}$ iteration.\label{sfig:testa}]{%
  \includegraphics[width=.3\linewidth]{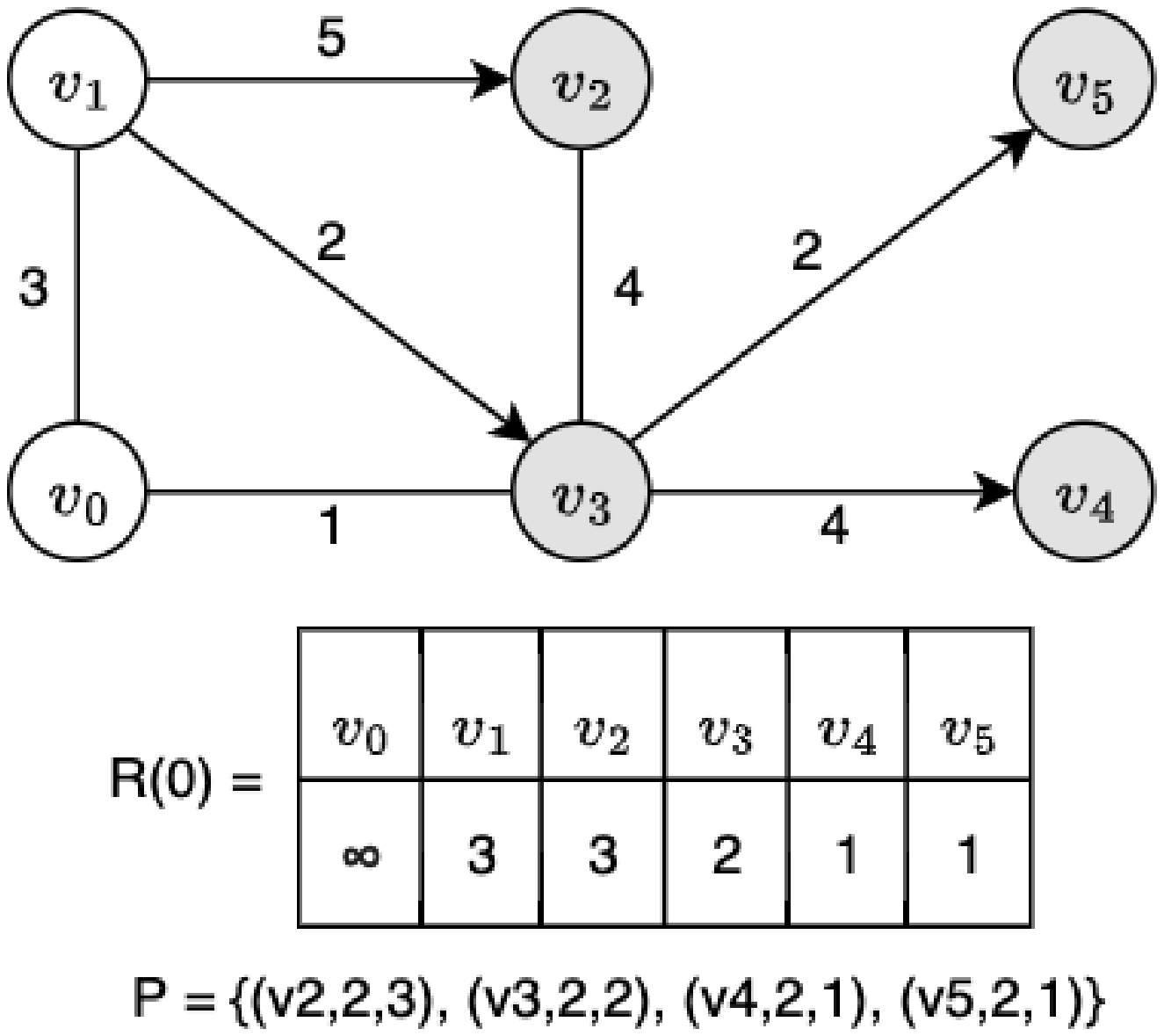}%
  \label{fig:bfs_b}
}\hfill
\subfloat[The $3^{rd}$ iteration.\label{sfig:testa}]{%
  \includegraphics[width=.3\linewidth]{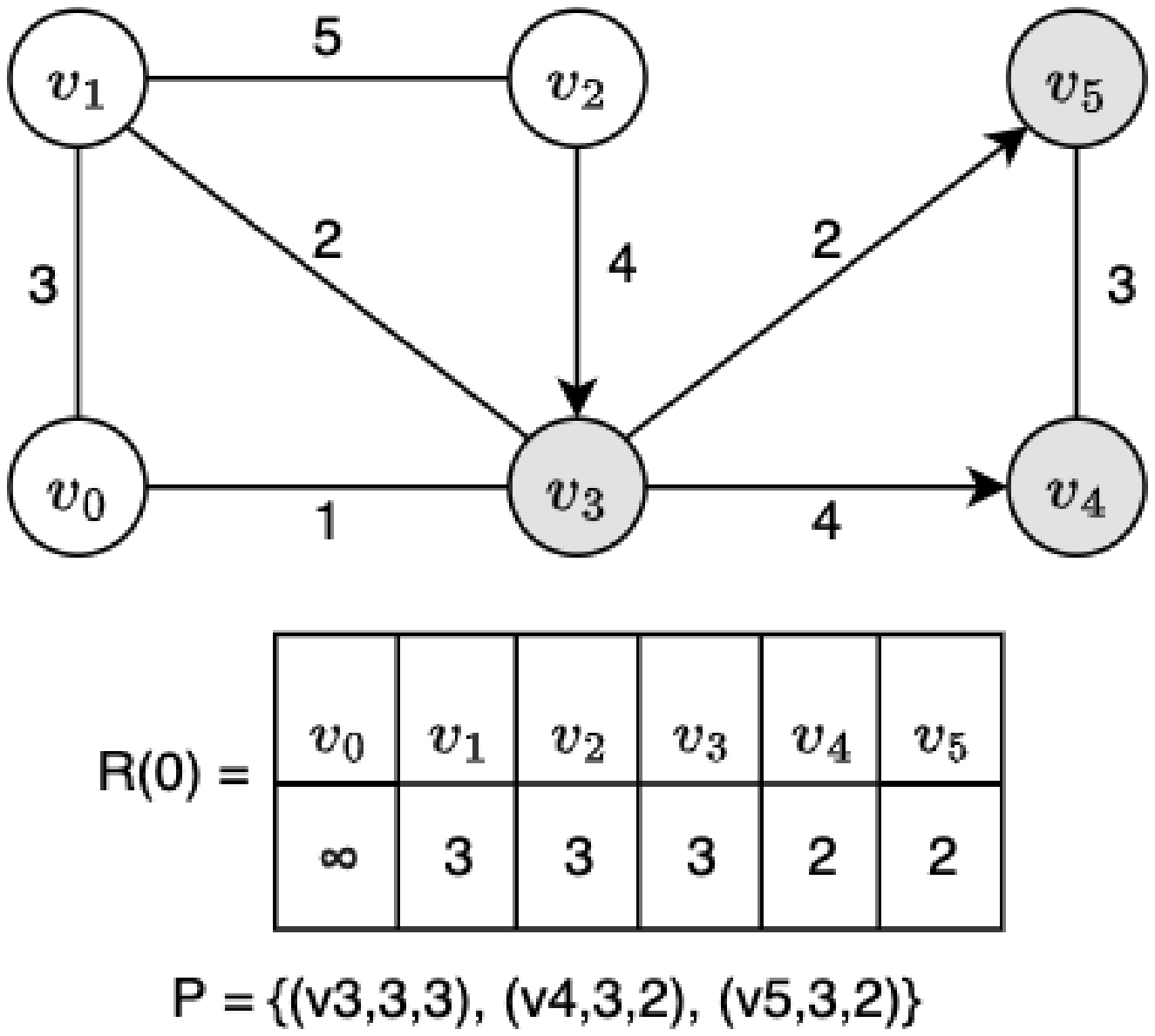}%
  \label{fig:bfs_c}
}\hfill
\subfloat[The $4^{th}$ iteration.\label{sfig:testa}]{%
  \includegraphics[width=.3\linewidth]{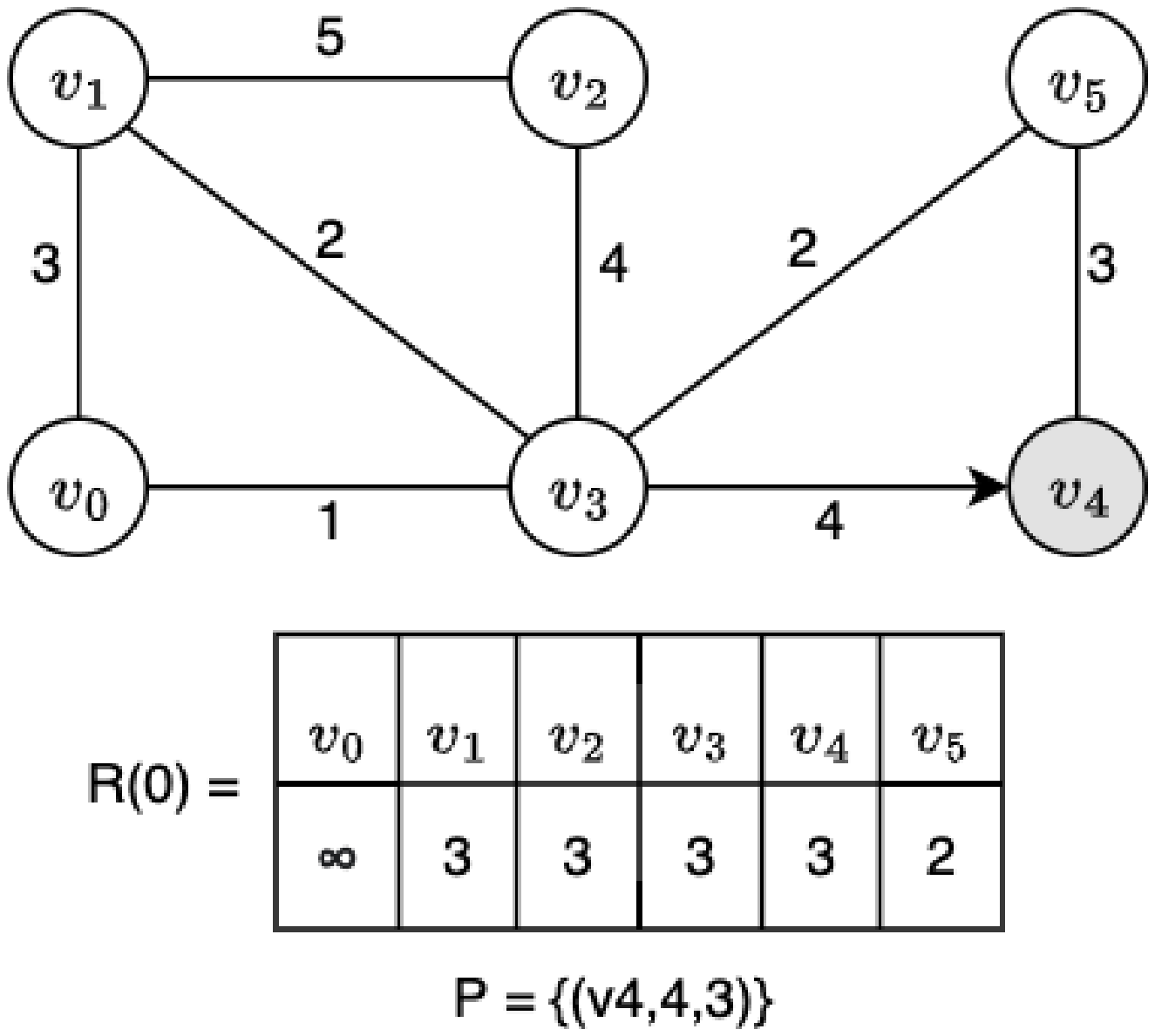}%
  \label{fig:bfs_d}
}\hfill
\subfloat[The $5^{th}$ iteration.\label{sfig:testa}]{%
  \includegraphics[width=.3\linewidth]{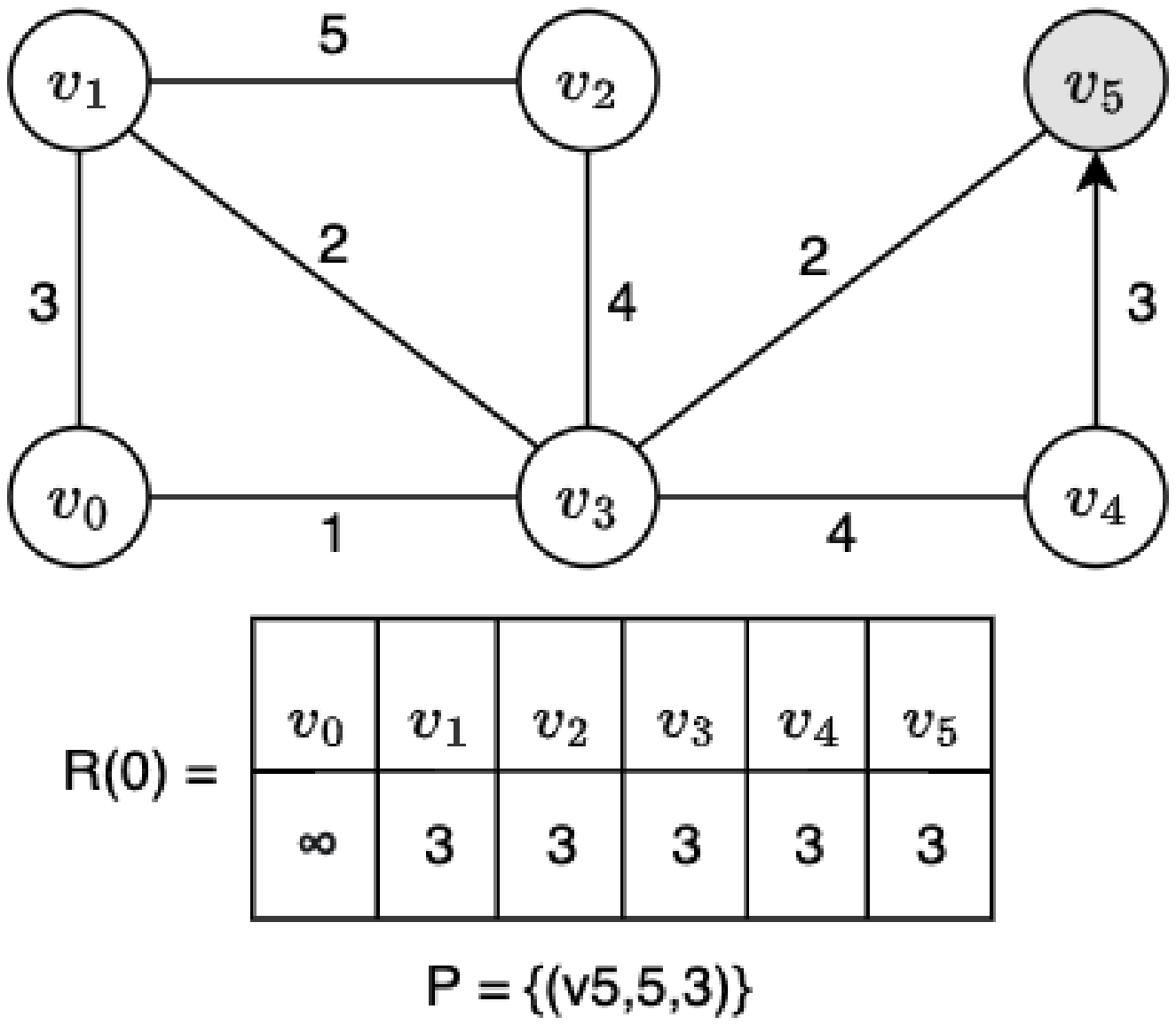}%
  \label{fig:bfs_e}
}\hfill
\subfloat[The $6^{th}$ iteration.\label{sfig:testa}]{%
  \includegraphics[width=.3\linewidth]{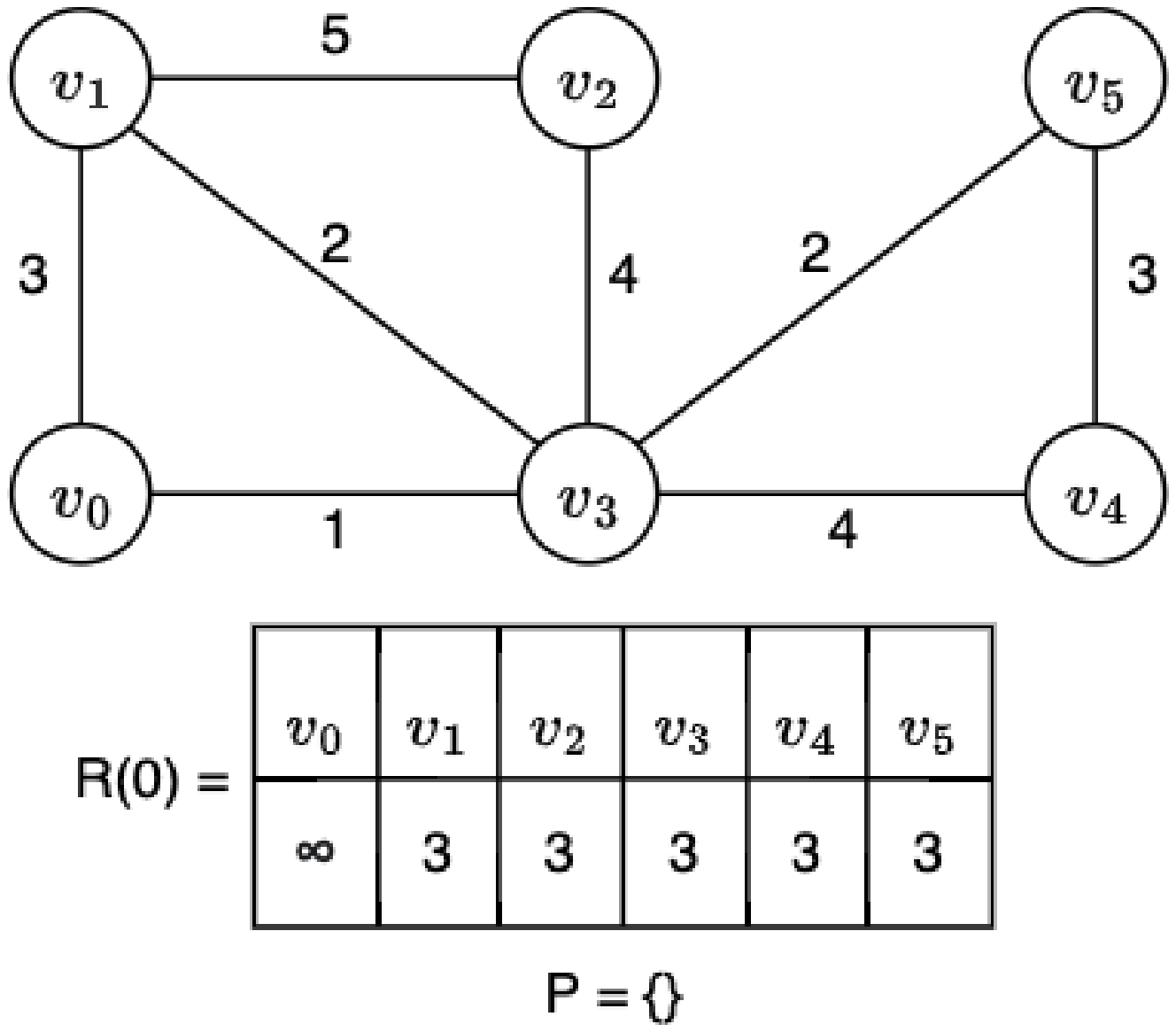}%
  \label{fig:bfs_f}
}
\caption{The constrained BFS process for $v_0$.}
\label{fig:BFS_example}
\end{figure*}

\subsection{Distance-Prioritized Search Order}
The whole index construction process consists of $|V|$ iterations' constrained BFS starting from different vertices. In each constrained BFS, it will explore at most $|V|$ vertices, but each vertex will be touched at most once. The order of these $|V|$ iterations' starting vertex is named \textit{Vertex Order}, while in $i$-th constrained BFS starting from vertex $v_i$, the order to explore remaining vertices is called the \textit{search Order} of $v_i$. These two types of orders are crucial for indexing time, indexing size, and query time in the 2-hop based index.



Below we introduce three properties that we aim to preserve for \ourindex. Then, a smart search order is proposed to guarantee these three properties at no additional cost, particularly the \textit{minimal} property.
\begin{itemize}
    \item \underline{\textit{Soundness}}.~If there are two index entries $(v,d_1, w_1) \in 
L(s)$ and $(v, d_2, w_2) \in L(t)$ with $ w_1 \leq w_2$ ($w_2 \leq w_1$), then there exists a quality constrained path from $s$ to $t$ with distance $d_1 + d_2$ that satisfies quality constraint $w \leq w_1$ ($w_2$).
    \item \underline{\textit{Completeness}}.~If there is a quality constrained shortest path $P_0$ from $s$ to $t$ with distance $d$, and satisfying quality constraint $w$ ($w = min(w(e) | e \in P_0 )$), then there exist either two label entries  $(v,d_1, w_1) \in L(s)$ and $(v, d_2, w_2) \in L(t)$. If $w = min(w_1, w_2)$, then $d_1 + d_2 = d$ and $w = w$, or single label entry such as $(s, d, w)$ $\in L(t)$ or $(t, d, w)$ $\in L(s)$.
    \item \underline{\textit{Minimal}}.~Intuitively, the minimal property indicates that any deletion of the existing label entries will cause incorrect results for some queries. This property is formulated as follows: For a vertex $u$, an entry $I = (v, d_1, w_1)$ is minimal if $I$ is not dominated by any other entries in $L(u)$; that is, there is no entry $I' = (v, d_2, w_2) \in L(v)$ s.t. $d_2 \leq d_1$, and $w_2 \geq w_1$. An index entry $I = (v, d_1, w_1)$ is \textit{necessary} if there does not exist a vertex $u_0$ s.t. $(u_0, d_0, w_0) \in$ $L(s)$ and $(u_0, d_0', w_0') \in$ $L(t)$ where $d_0 + d_0' \leq d_1$ and $min(w_0, w_0') \geq w_0$. Then, a \ourindex \ is \textit{minimal} if every index entry in it is both minimal and necessary.
\end{itemize}

To efficiently construct the \ourindex, the dominance relationships between edge qualities are exploited. Utilizing path domination, pruning is performed by traversing vertices in a certain order. To optimize the number of path traversals that are pruned throughout the index construction process, the following priority-based search orders are strictly adhered:

\begin{enumerate}
    \item \textit{Distance order}.~Computing the index entries with smaller distance $d$ first;
    \item \textit{Quality order}.~When tackling one specific $d$ value, explore the entries with the largest quality value $w$ first.
\end{enumerate}

Based on the above processing order, the WC-INDEX is constructed using BFS traversals from each vertex. Consider the BFS process from vertex $v \in V$. The maximum $w$ value of paths from $v$ to all other vertices are recorded. During each iteration of BFS expansion, it will be determined whether the visited vertices, say $u$, can be reached from $v$ by an existing path that dominates the current path, where an existing path is a path indicated by the current index entries. If the current path from $v$ to $u$ is dominated by an existing path, $u$ is pruned from the BFS process. Otherwise, the corresponding index entries are added into the WC-INDEX. Before moving onto the next iteration, all paths are processed from this iteration of expansion. Therefore, it is guaranteed that the index entries added in this iteration will not be dominated by any other entries.

\begin{algorithm}[thb]
	\SetAlgoVlined
	\SetFuncSty{textsf}
	\SetArgSty{textsf}
	\small
	\topcaption{WC-Index Construction} \label{alg:index-construction-distance}
	\Input {a graph $G$, and a vertex order $\mathcal{O}$;}
	\Output {the constructed 2-hop index $\mathcal{L}$}
	\State{$\mathcal{L}(v) \leftarrow \{(v,0,\infty)\}$ for all $v \in V(G)$;} \\
	\label{line:lv}
	\For{ $k = 1,2,\cdots, n$} { \label{naive-iterate-every-vertex}
	    \State{$v_k \leftarrow $ the $k$-th vertex in $\mathcal{O}$; } \\
	    \State{{$R(v) \leftarrow 0$ for all $v \in V(G)$;}} \\
	    \label{line:rv}
	     \State{$P \leftarrow$ an empty queues;} \\
	     \label{line:Pinit}
    \State{$P.push((v_k,0,\infty))$;} \\
    \While{$P \neq \emptyset$}{
    \label{dist-p-beg}
        \State{$vec \leftarrow \emptyset$;} \\
        \While{$P \neq \emptyset$}{
            \State{$(u, d, w) \leftarrow P.pop()$;} \label{dist-p-start}\\
            \State{\textbf{if} \textsc{Query}$(v_k,u, w, \mathcal{L}) \leq d$ \textbf{then continue};}\\
            \label{cons:pruned1}
            \State{\textbf{else} $\mathcal{L}_{k}(u) \leftarrow \mathcal{L}_{k}(u) \bigcup (v_k, d, w)$;} \\
            \label{cons:insert}
            \ForEach{$v_i \in N_G(u) {:O(v_i) > O(v_k)}$}{ \label{distance-for-start}
                \State{$w' \leftarrow \min(\delta(e=(u,v_i)), w)$;} \\
                \State{{\textbf{if} $w' \leq R(v_i)$ \textbf{then continue}};} \\
                \State{$vec \leftarrow vec \bigcup \{v_i\}$; $R(v_i) \leftarrow w'$;} \label{dist-p-end} \\
            }
        }
        \State{\textbf{foreach} $w \in vec$ \textbf{do} $P.push(v_i, d + 1, R(v_i))$;} \\
        \label{line:insert}
    }
    \label{line:bfs_end}
	}
	\State{\Return $\mathcal{L}$;}\\ 
\end{algorithm}

\stitle{Details}.~The algorithm for constructing \ourindex \ is shown in Algorithm \ref{alg:index-construction-distance}. Given graph $G$ and a vertex order $O$, this algorithm constructs the \ourindex \ $L$, which consists of entry sets $L(v)$ for every $v \in V$. Each entry set $L(v)$ is initialized as a set that contains only one entry, which corresponds to $v$ itself (Line~\ref{line:lv}). Then, BFS is executed for all $v_k \in V$ following the specified order. A vector $R(v)$ of size $|V|$ is used to record the current largest $w$ value of all paths from $v$ to all other vertices in the graph, with values set to 0 (Line~\ref{line:rv}). The maximum $w$ value from $v_k$ to $u$ is denoted as $w_{max}^{u}$. A queue $P$ is used to store tuples in the form of $(u,d,w)$, where $u$ is a vertex visited in the previous round of BFS, $d$ is the associated BFS path length, and $w$ is the minimum edge quality of that path. $P$ is initialized to contain a single element of $(v_k,0,\infty)$ (Line~\ref{line:Pinit}). The BFS process from $v_k$ is described in Line~\ref{dist-p-beg}-\ref{line:bfs_end} of Algorithm~\ref{alg:index-construction-distance}. 

During each iteration of BFS expansion, for each entry in queue $P$, a query is performed on the $w$-constrained path from $v_k$ to $u$ using the current index constructed so far (Line ~\ref{cons:pruned1}). This entry will be pruned if the result $w$-constrained distance from the query is smaller than the current BFS distance $d$. If not, the entry is appended to the index (Line~\ref{cons:insert}). Then, for each $v_i \in N(u)$, it will determine whether $v_i$ can be reached from $v$ by an alternative path with a greater $w$ value (Line~\ref{distance-for-start}-\ref{dist-p-end}). This is determined by comparing the current $w$ to $w_{max}^{v_i}$. If $w < w_{max}^{v_i}$, then $w$ is pruned from the BFS process. Otherwise, $v_i$ is added to a temporary set, and update $w_{max}^{v_i}$ with the value $w$. After all neighbors of $u$ have been processed, all temporary queue entries are pushed into $P$ to be processed in the next expansion iteration, with a distance of one step further from $v_k$ (Line~\ref{line:insert}). Thus, the algorithm ensures that for each $v_i$ only one path with the greatest $w$ will be considered in the next iteration. After all potential entries are popped from $P$, the process is repeated on this queue for the following round of BFS. The entire BFS from $v_k$ ends when $P$ is empty. The construction of \ourindex \ index finishes, after performing BFS for all $v \in V$.

\begin{example}
Figure~\ref{fig:BFS_example} illustrates how the Algorithm~\ref{alg:index-construction-distance} operates for the vertex $v_0$ in Figure~\ref{fig:running_example}. $R(v) \leftarrow 0$ for every $v \in V(G)$ and $R(v_0) = \infty$. Figure~\ref{fig:bfs_a} investigates the neighbors of $v_0$, i.e., $v_1$ and $v_3$. Then, $R(v_1) = 3$ and $R(v_3) = 1$. In addition, $P$ is updated to include the newly added vertices $v_1$ and $v_3$. Figure~\ref{fig:bfs_b} indicates that $v_2, v_3, v_4$, and $v_5$ will be explored. It is noted that $v_3$ is updated into $P$ again since in the round, $R(v_3)$ is updated with a larger value, i.e., $2$. Likewise, in Figure~\ref{fig:bfs_c}, $v_3, v_4$ and $v_5$ are inserted into $P$ due to their updated $R$ values. In Figure~\ref{fig:bfs_d}, only $v_4$ is inserted into $P$ and $R(v_4)$ is updated with $3$ since a path $v_0 \rightarrow v_1 \rightarrow v_2 \rightarrow v_3 \rightarrow v_4$ is found. In this path, the minimal quality is $e(v_0, v_1) = 3$. Figure~\ref{fig:bfs_e} depicts the updates for $v_5$ with only $v_5$ being inserted into $P$ and $R(v_5)$ being updated with value $3$. This is the result of the newly found path $v_0 \rightarrow v_1 \rightarrow v_2 \rightarrow v_3 \rightarrow v_4 \rightarrow v_5$. Figure~\ref{fig:bfs_f} illustrates the last iteration. The constrained BFS for $v_0$ terminates at this iteration since there is no update for any vertex. For every triple inserted into $P$, the corresponding label entry is inserted into $L(v_0)$.
\end{example}

\begin{lemma}
\label{lemma:dom}
In \refalg{index-construction-distance}, for each candidate index entry popped by the queue (Line~\ref{dist-p-start}), it cannot be dominated by all the candidate index entries popped by the queues afterwards.
\end{lemma}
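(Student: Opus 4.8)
The plan is to reduce the statement to a single monotonicity fact about the pop order, exploiting that path dominance in \refdef{dom} forces the dominating path to be no longer than the dominated one. First I would show that \refalg{index-construction-distance} pops candidate entries in non-decreasing order of the distance field $d$. This is a consequence of the round structure of the algorithm: the inner \textsf{while} loop starting at Line~\ref{dist-p-start} completely drains the queue $P$ and never pushes anything back into it, and only after this loop finishes does Line~\ref{line:insert} enqueue the newly discovered vertices at distance $d+1$. By induction on the outer-loop iterations, the $r$-th round pops exactly those entries whose distance field equals $r-1$ (the current BFS level), so an entry popped later can never carry a strictly smaller distance than one popped earlier.

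Second, I would record that within any single round each vertex occurs at most once in $P$. This holds because the neighbours discovered while draining a level are accumulated in the \emph{set} $vec$ (Line~\ref{dist-p-end}), and Line~\ref{line:insert} then pushes a single tuple $(v_i,d+1,R(v_i))$ per element of $vec$. Hence, for any fixed target vertex $u$, the candidate entries concerning $u$ are popped in \emph{strictly} increasing distance order, one per BFS level.

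With both facts available the argument is immediate. Let $I_1=(u,d_1,w_1)$ be popped before $I_2=(u',d_2,w_2)$. If $u \neq u'$, the two entries represent paths with different endpoints, which are incomparable under \refdef{dom}, so $I_2$ does not dominate $I_1$. If $u=u'$, the two entries lie in different rounds, and combined with the non-decreasing pop order this gives $d_2 > d_1$. But $I_2$ dominating $I_1$ would require its length $d_2$ to satisfy $d_2 \leq d_1$, contradicting $d_2 > d_1$. Therefore no entry popped afterwards dominates $I_1$, as claimed.

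The one delicate point, and where I would be most careful, is pinning down the non-decreasing-distance invariant, since it rests entirely on the control flow that separates draining the current level (the inner loop) from enqueuing the next level (Line~\ref{line:insert}). I would make explicit that no push into $P$ happens inside the inner loop, so that all distances processed in one round are genuinely uniform; once this invariant is stated precisely, the remainder follows with no further work. Notably, the quality-priority ordering within a level plays no role here, because two entries for the same vertex never share a round; it is instead needed for the companion minimality argument.
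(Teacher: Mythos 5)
Your proposal is correct and follows essentially the same route as the paper's proof: both rest on the observations that the level-by-level draining of $P$ makes pop order non-decreasing in $d$, that the temporary set $vec$ together with the $R(\cdot)$ check guarantees at most one entry per vertex per level, and hence that any later entry for the same vertex has strictly larger $d$ and so cannot dominate under \refdef{dom}. Your version is merely a cleaner write-up of the same argument (with the incomparability of different-endpoint entries made explicit), so no further comment is needed.
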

\begin{proof}
Since for each iteration of BFS expansion, all entries in queue $P$ are popped, and then new entries are added back to this queue $P$. Then, this queue can only contain entries with the same $d$ at a given moment. For vertex $u$ in a certain BFS iteration, if $u$ already exists in the temporary set, which indicates there is an existing path from $v_k$ to $u$ with $w_{max}^{u}$. If the current path induces an entry with $w > w_{max}^{u}$, $w_{max}^{u}$ is updated to be $w$. Otherwise, nothing happens. Consequently, $w$ will only exist once in the temporary set, and will only be pushed into queue $P$ once, with $w = w_{max}^{u}$, which is the maximum $w$ value at distance $d$. Any future entries in $P$ regarding $u$ will include a larger $d$. Therefore, popped entries will never be dominated by future entries in the queues.
\end{proof}

\noindent\textbf{{Correctness of Algorithm}}.~Then, the correctness of the algorithm is proved by its \textit{Soundness} and \textit{Completeness}. Additionally, the \textit{Minimal} properties is proved. 

\begin{theorem}
\label{theo:property}
Algorithm~\ref{alg:index-construction-distance} can construct a \textit{Sound}, \textit{Complete}, and \textit{Minimal} index for \ourproblem \ problem.
\end{theorem}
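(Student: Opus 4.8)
The plan is to prove the three properties separately, pushing soundness and the minimal property onto the bookkeeping that \refalg{index-construction-distance} already performs, and concentrating the real effort on completeness.

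\textbf{Soundness.} First I would observe that the algorithm inserts a triple $(v_k,d,w)$ into $\mathcal{L}(u)$ (Line~\ref{cons:insert}) only after popping it from $P$, and that every tuple ever placed in $P$ corresponds to an actual BFS path rooted at $v_k$ of length $d$ whose minimum edge quality is exactly $w$: this is the quantity $w'=\min(\delta(u,v_i),w)$ propagated and stored in $R(v_i)$ at Line~\ref{distance-for-start}--\ref{dist-p-end}. Hence each entry $(v,d_1,w_1)\in\mathcal{L}(s)$ certifies a genuine $w_1$-path of length $d_1$ from $v$ to $s$, and likewise for $\mathcal{L}(t)$. Concatenating two entries sharing a hub $v$ yields an $s$--$t$ walk of length $d_1+d_2$ whose minimum edge quality is $\min(w_1,w_2)$, which is precisely the value the query reports; thus any distance the query returns is achievable, giving soundness.

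\textbf{Completeness.} This is the core. Given $s,t,w$, let $P_0$ be a shortest $w$-path and let $v^{*}$ be the vertex of \emph{smallest} order on $P_0$. Because the neighbour expansion in Line~\ref{distance-for-start} only follows vertices whose order exceeds that of the root, every other vertex of $P_0$ lies in the reachable cone of the BFS from $v^{*}$. Split $P_0$ into a segment from $v^{*}$ to $s$ of length $d_1$ and minimum quality $q_1$, and a segment from $v^{*}$ to $t$ of length $d_2$ and minimum quality $q_2$, so $d_1+d_2=d$ and $\min(q_1,q_2)=w$; if $v^{*}\in\{s,t\}$ one segment is trivial and we land in the single-entry case. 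I would then argue by induction on $O(v^{*})$ that $Q(s,t,w)\le d$, which with soundness forces equality. If neither the entry toward $s$ nor toward $t$ is pruned during the BFS from $v^{*}$, then—using that the BFS retains for each frontier vertex the maximum attainable minimum-quality at each distance (Line~\ref{dist-p-end}) together with Lemma~\ref{lemma:dom}—entries $(v^{*},d_1',q_1')$ and $(v^{*},d_2',q_2')$ with $d_i'\le d_i$ and $q_i'\ge q_i$ are deposited in $\mathcal{L}(s)$ and $\mathcal{L}(t)$, and combining them answers the query with length $\le d$ and quality $\ge w$. If instead some prefix vertex is pruned at Line~\ref{cons:pruned1}, then by the guard \textsc{Query} already recovers a $w$-path of no greater length from the labels built so far; since those labels come either from earlier roots (order $<O(v^{*})$) or from already-settled levels of the current BFS, the induction hypothesis certifies that recovered path, and its length does not exceed $d$.

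\textbf{Minimal.} For within-label minimality I would invoke Lemma~\ref{lemma:dom}: inside a single BFS from a hub $v$, the entries deposited for any fixed target $u$ are produced in strictly increasing distance and, by the distance-then-quality priority, strictly increasing quality, so they form an antichain under the dominance order of \refdef{dom} and none dominates another; entries with distinct hubs are incomparable by definition. For necessity I would appeal directly to the guard at Line~\ref{cons:pruned1}: a triple $(v,d,w)$ is appended to $\mathcal{L}(u)$ only when \textsc{Query}$(v,u,w,\mathcal{L})>d$, i.e.\ when no existing combination of labels already yields a $w$-path of length $\le d$. Consequently deleting the entry would make the query for $(v,u,w)$, and hence some admissible query, return a strictly larger value, so every surviving entry is necessary. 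Together these give the minimal property.

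\textbf{Main obstacle.} The delicate step is the completeness induction, specifically a prune that occurs at an \emph{interior} vertex of $P_0$: there one must ensure the replacement path supplied by the guard still meets the quality constraint $w$ and does not inflate the length, and that the \emph{online} character of the test—labels only partially built for the current root—does not introduce a circular dependence. Setting up a well-founded measure, lexicographic in $\bigl(d,\,O(v^{*})\bigr)$, that rules out such circularity is the crux of the proof.
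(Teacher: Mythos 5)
Your proposal is correct and follows essentially the same route as the paper: soundness by concatenating the two certified hub paths, completeness by locating the hub on the optimal path and using the \textsc{Query} pruning guard together with an induction on the processing order (the paper phrases this as a ``first vertex pair to cause incorrectness'' contradiction, which is the contrapositive of your induction on $O(v^{*})$), and minimality from the guard at Line~\ref{cons:pruned1} plus the distance-then-quality search order and Lemma~\ref{lemma:dom}. If anything, your completeness argument is more explicit than the paper's own, which never names the minimum-order vertex $v^{*}$ on $P_0$ nor addresses the circularity concern you flag.
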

\begin{proof}
First, the \textit{Soundness} and \textit{Completeness} are demonstrated. These two characteristics are equivalent to the correctness of Algorithm~\ref{alg:index-construction-distance}.

\noindent\underline{\textit{Soundness}}.~It is proved by contradiction. Assume there are two index entries $(v,d_1, w_1) \in L(s)$ and $(v, d_2, w_2) \in L(t)$ with $ w_1 \leq w_2$ ($w_2 \leq w_1$), and there does not exist a quality constrained path from $s$ to $t$ with distance $d_1 + d_2$ and satisfy quality constraint $w \leq w_1$ ($w_2$). According to the index construction process, there are two quality constrained shortest paths. The first is $s \rightsquigarrow v$ with distance $d_1$ and quality constraint $w_1$, whereas the second is $v \rightsquigarrow t$ with distance $d_2$ and quality constraint $w_1$. Therefore, it can be combined to produce a new path $P_{new}$. Note that \textit{soundness} simply requires a quality constraint path; it does not have to be shortest.

\noindent\underline{\textit{Completeness}}.~Similarly, the \textit{Completeness} is demonstrated by contraction. Assume that there is a quality constrained shortest path from $s$ to $t$ with distance $d$, satisfying quality constraint $w$, then there does not exist either two label entries  $(v,d_1, w_1) \in L(s)$ or $(v, d_2, w_2) \in L(t)$. If $w = min(w_1, w_2)$, then $d_1 + d_2 = d$, nor one label entry like $(s, d, w)$ $\in L(t)$ or $(t, d, w)$ $\in L(s)$. Assume $s$ is explored before $t$\footnote{The proof process is similar if $t$ is the earlier one.} and $s$ is the first vertex that leads to such incorrectness, and $s ,t$ is the first vertex pair to lead the incorrectness. This indicates that the \textit{Completeness} of all the previously explored vertices is maintained. Consequently, according to Algorithm~\ref{alg:index-construction-distance} Line~\ref{cons:pruned1}, if the $Query(s, t, w)$ is pruned, then it indicates that there exist two label entries  $(v,d_1, w_1) \in L(s)$ and $(v, d_2, w_2) \in L(t)$. If $w = min(w_1, w_2)$, then $d_1 + d_2 = d$. Otherwise, the label entry $(s, d, w)$ is inserted into $L(t)$ in accordance with Algorithm~\ref{alg:index-construction-distance} Line~\ref{cons:insert}.

\noindent\underline{\textit{Minimal}}.~According to Algorithm~\ref{alg:index-construction-distance} Line~\ref{cons:pruned1}, a newly added label entry is \textit{Minimal} when it is inserted into the index. Therefore, it is only necessary to prove it will not be dominated in the label entries that are inserted after it. Due to the \textit{distance order}, \textit{quality order}, and Definition~\ref{def:dom}, this property is automatically maintained. 
\end{proof}

\begin{theorem}
The index constructed by Algorithm~\ref{alg:index-construction-distance} is capable of producing correct results.
\end{theorem}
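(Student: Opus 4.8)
The plan is to show that for every query $Q(s,t,w)$, Algorithm~\ref{alg:query} returns exactly the true constrained distance $dist^w_G(s,t)$, which is precisely what ``producing correct results'' means. Since Theorem~\ref{theo:property} already establishes that the constructed index is \textit{Sound} and \textit{Complete}, I would not re-examine the construction; instead I would derive correctness as a two-sided inequality built directly on these two properties, matching the remark in the previous proof that soundness and completeness are equivalent to the correctness of \refalg{index-construction-distance}.

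First I would prove $Q(s,t,w) \geq dist^w_G(s,t)$ using \textit{Soundness}. The query combines a pair $(v,d_1,w_1)\in L(s)$ and $(v,d_2,w_2)\in L(t)$ only when both $w_1\geq w$ and $w_2\geq w$. By \textit{Soundness}, each such admissible pair certifies an actual $s$-$t$ path of length $d_1+d_2$ whose every edge has quality at least $\min(w_1,w_2)\geq w$: concatenating the recorded $w_1$-path $s\rightsquigarrow v$ with the recorded $w_2$-path $v\rightsquigarrow t$ yields a walk all of whose edges satisfy the constraint, and deleting repeated vertices leaves a genuine $w$-path of length at most $d_1+d_2$. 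Hence every candidate value entering the query's minimization is an achievable $w$-constrained distance, so the returned minimum cannot undershoot the true optimum.

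Next I would prove $Q(s,t,w) \leq dist^w_G(s,t)$ using \textit{Completeness}. Let $P_0$ be a shortest $w$-path with $len(P_0)=dist^w_G(s,t)=d$. By \textit{Completeness}, either there is a common hub $v$ with entries $(v,d_1,w_1)\in L(s)$ and $(v,d_2,w_2)\in L(t)$ satisfying $\min(w_1,w_2)\geq w$ and $d_1+d_2=d$, or the optimum is witnessed by a single entry $(s,d,w)\in L(t)$ or $(t,d,w)\in L(s)$. The single-entry case is subsumed because each vertex carries its self-label $(v,0,\infty)$ (Line~\ref{line:lv}), so the hub can be taken to be $s$ or $t$ itself with a zero-length, infinite-quality half. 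In either case the pair is admissible for the query, so the value $d$ appears among the minimized candidates and the returned answer is at most $d$. Combining the two inequalities gives $Q(s,t,w)=dist^w_G(s,t)$, establishing correctness.

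I expect the main obstacle to be the concatenation-quality bookkeeping in the first direction: one must confirm that joining two separately recorded constrained subpaths at the hub produces a walk whose minimum edge quality is exactly $\min(w_1,w_2)$ and not something smaller, and that reducing this walk to a simple path neither raises its length above $d_1+d_2$ nor lowers its minimum quality below $w$. Handling the endpoint-as-hub degenerate cases uniformly through the $(v,0,\infty)$ self-labels is the other detail to verify, but it follows immediately from the initialization in Line~\ref{line:lv}, so it does not represent a genuine difficulty.
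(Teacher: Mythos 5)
Your proposal is correct and follows essentially the same route as the paper: the paper's own proof simply invokes the \emph{Soundness} and \emph{Completeness} established in Theorem~\ref{theo:property} and declares correctness immediate, and your two-sided inequality (Soundness giving that every query candidate is an achievable $w$-constrained path length, Completeness giving that the optimum appears among the candidates, with the single-entry case absorbed by the $(v,0,\infty)$ self-labels) is exactly the argument the paper leaves implicit. You have merely supplied the details the paper omits, so no further comparison is needed.
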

\begin{proof}
Theorem~\ref{theo:property} proves the \textit{Soundness} and \textit{Completeness} of constructed index. Thus, its correctness is immediately proved.
\end{proof}

\stitle{Complexity Analysis}.~The while loop dominates the time complexity of indexing from the vertex $v_k$. Let $I(v)$ denote all the index entries associated with $v$ and let $\zeta = \max\limits_{v\in V(G)}|I(v)|$. Let $d_{max}$ denote the maximum vertex degree in the graph. Observe that in \refalg{index-construction-distance}, Lines~\ref{distance-for-start}-\ref{dist-p-end} are executed at most $\zeta$ times, hence the size of the priority queue cannot exceed $\zeta d_{max}$. For each index entry in the queue, a query operation is performed to determine whether it can be covered by the existing index entries, and the query time is bounded by $O(\zeta)$. As a result, the time complexity of \refalg{index-construction-distance} is $O(n \cdot \zeta \cdot d_{max} (\log \zeta \cdot d_{max} + \zeta))$.

The size of the index is bounded by $O(\sum_{u \in V(G)}\sum_{v \in V(G)_{\leq u}}min(D, |w|)).$ 


\subsection{Query-Efficient Implementation}
Since the $Query$ function is commonly utilized during the index construction and query stages, it is a vital component that influences three aspects of the index: indexing time, index size, and query time. This subsection investigates how to efficiently implement the $Query$ function by utilizing the problem's property.

Given a query$(s, t, d, w)$, a basic operation is to determine whether there are two label entries $(u_1, d_1, w_1)$ and $(u_2, d_2, w_2)$ with $u = u_1 = u_2$, $d_1 + d_2 \leq d$, $w_1 \geq w$, and $w_2 \geq w$. 

\begin{algorithm}[thb]
	\SetAlgoVlined
	\SetFuncSty{textsf}
	\SetArgSty{textsf}
	\small
	\topcaption{$Query$} \label{alg:query_naive}
	\Input{any two vertices $s,t \in V$, constraint $w$, and current distance $d$;}
	\Output{a boolean value indicating if a path is found}
	\For{$\forall I_j$ $\in$ $L[t]$} {
	\label{naive:for}
	    \If{$I_j.vertex > s$ or $I_j.quality < w$}{\textbf{continue};}
	    \label{naive:prune1} 
	    \State{$v = I_j.vertex$;}\\
	    \If{$L[s][v] = \emptyset$} {\textbf{continue};}
	    \For{$\forall I_i$ $\in$ $L[s][v]$} {
	        \If{$I_i.quality \geq w$} {
	            \If{$I_i.dist + I_j.dist <= d$} {
	                \State{\Return $True$;}\\ 
	            }
	        }
	    }
	}
	\State{\Return $False$;}\\ 
\end{algorithm}

\noindent\textbf{Na\"ive Implementation}.~For simplicity, $L[u]$ denotes all the label entries of vertex $u$, and $L[u][v]$ denotes all the label entries as $(v, d_v, w_v)$ in $L[u]$.
The na\"ive query function is represented by Algorithm~\ref{alg:query_naive}. Line~\ref{naive:for} traverses every label entry in the $L[t]$. Assume a label entry is $I_j$, Line~\ref{naive:prune1} prunes it if its vertex order is larger than $s$ or $quality$ is less than the quality constraint. Otherwise, entries of $L[s][v]$ are explored, where $v$ is the vertex of $I_j$, and validate whether there are two valid label entries $I_j$ and $I_i$ to return a $true$ result. The time complexity of this implementation is $O(|L(s)| + |L(t)| + \sum_{v \in L[t].vertex}{ | L[t][v] | \times | L[s][v] |})$.

The following theorem helps speed up this procedure.
\begin{theorem}
\label{theo:increase}
For two label entries $(u_0, d_0, w_0)$ and $(u_0, d_1, w_1)$ in $L(v)$, if $d_0 > d_1$, then $w_0 > w_1$, and vice versa.
\end{theorem}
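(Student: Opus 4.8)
The plan is to derive this monotonicity directly from the \emph{minimal} property of the \ourindex\ already established in Theorem~\ref{theo:property}, rather than re-examining the construction in Algorithm~\ref{alg:index-construction-distance}. Recall that every entry retained in a label set corresponds to a minimal $\bar w$-path and, by the minimality guarantee, no entry in $L(v)$ may be dominated by another entry of $L(v)$ that shares the same hub. The two entries $(u_0, d_0, w_0)$ and $(u_0, d_1, w_1)$ both lie in $L(v)$ and share the hub $u_0$, so neither can dominate the other in the sense of Definition~\ref{def:dom}. Everything then reduces to unwinding that definition.

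First I would treat the forward direction. Assuming $d_0 > d_1$, I would argue by contradiction: suppose $w_0 \le w_1$. Then $d_1 \le d_0$ and $w_1 \ge w_0$ hold simultaneously, so by Definition~\ref{def:dom} the path underlying $(u_0, d_1, w_1)$ dominates the path underlying $(u_0, d_0, w_0)$. This contradicts the fact that the surviving entry is minimal, forcing $w_0 > w_1$.

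The converse is symmetric. Assuming $w_0 > w_1$, suppose $d_0 \le d_1$. Then $d_0 \le d_1$ together with $w_0 \ge w_1$ says that $(u_0, d_0, w_0)$ dominates $(u_0, d_1, w_1)$, again violating minimality; hence $d_0 > d_1$. Combining the two directions yields the claimed equivalence for entries with a common hub.

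The only point needing care is the bookkeeping of strict versus non-strict inequalities in Definition~\ref{def:dom}: dominance is stated with $\le$ on length and $\ge$ on quality, so a \emph{strict} hypothesis on one coordinate converts a would-be tie on the other coordinate into a genuine domination, which is exactly what drives each contradiction. I do not expect any substantive obstacle beyond this; the statement is essentially an immediate corollary of minimality. As a useful by-product I would also remark that no two distinct retained entries with hub $u_0$ can share a distance (otherwise one dominates the other unless they coincide), so the correspondence $d \mapsto w$ among entries with hub $u_0$ is strictly increasing --- precisely the structure the query-efficient implementation of the following subsection relies on.
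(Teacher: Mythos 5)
Your proof is correct and follows essentially the same route as the paper: a contradiction argument showing that if the stated monotonicity failed, one entry would dominate the other under Definition~\ref{def:dom} and hence could not both survive in $L(v)$. The only cosmetic difference is that you invoke the minimality property of Theorem~\ref{theo:property} while the paper cites Lemma~\ref{lemma:dom}; both rest on the same non-domination fact, so the arguments coincide.
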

\begin{proof}
This theorem is proved by contradiction. Assume that there are two label entries $(u_0, d_0, w_0)$, and $(u_1, d_1, w_1)$, s.t. $d_0 > d_1$ and $w_0 \leq w_1$. According to Lemma~\ref{lemma:dom}, $(u_0, d_0, w_0)$ will be eliminated since it is dominated by $(u_1, d_1, w_1)$, which results in a contradiction. Likewise, a similar contradiction exists when $w_0 > w_1$ with $d_0 \leq d_1$.
\end{proof}

\stitle{Querying}.~During the BFS process for one vertex $v_k$, it is noted that all queries are issued with one end-point as $v_k$. Therefore, an array $T$ of size $|V|$ is initialized with the existing index entry of that $v_k$ before the BFS begins. To evaluate $QUERY(u,v_k)$, the new querying algorithm needs $O(|L(u)|)$ time rather than $O(|L(u)|) + O(|L(v_k)|)$ for looping through two entry lists.

Based on Theorem~\ref{theo:increase}, the index entries $(d_i,w_i), i = 1,2,...,|L(v)|$ for vertex $v \in V$ must be in increasing order in terms of $d$ and $w$. If $j > i$, then both $d_j > d_i$ and $w_j > w_i$. Instead of iterating through $T$, binary search could be utilized to locate elements. Then, in such an implication, the time complexity is $O(|L(s)| + |L(t)| + \sum_{v \in L[t].vertex}{ | L[t][v] | \times log| L[s][v] |})$.

\noindent\textbf{Query-Efficient Implementation}.~Based on Theorem~\ref{theo:increase}, the time complexity can be further reduced to $O(|L(s)| + |L(t)|)$. The idea is explained as follows: Since the index entries $(d_i, w_i), i = 1,2,..., |L(v)|$ for vertex $v \in V$ must be in ascending order in terms of $d$ and $w$, if finding the first index entry $(u_i, w_i, d_i)$ whose $w_i \geq w$, $d_i$ is minimal for $(u_i, \cdot, \cdot)$. Thus, for every $v \in L(s)$ or $L(t)$, only one label entry is required. Then, a na\"ive scanning could be conducted to answer the queries. The time complexity is $O(|L(s)| + |L(t)| + \sum_{v \in L[t].vertex}{ (log| L[t][v] | + log| L[s][v] |)})$. Since $\sum_{v \in L[t].vertex}{ (log| L[t][v] | + log| L[s][v] | )}$ $\leq$ $|L(s)| + |L(t)|$, the final time complexity is $O(|L(s)| + |L(t)|)$.

\noindent\textbf{Details}.~The details of the Query-Efficient Implementation is illustrated in Algorithm~\ref{alg:query_efficient}. Line~\ref{eff:for} traverses every vertex $v \in L(t)$. Line~\ref{eff:prune} prunes if $L[s][v] = \emptyset$. If not empty, a modified binary search is utilized to locate the first label entry with $w_i \geq w$ in $L[t][v]$  and $w_j \geq w$ in $L[s][v]$, respectively. $true$ is immediately returned if $d_1 + d_2 \leq d$ in Line~\ref{eff:true}. Otherwise, the query answer is $false$ in Line~\ref{eff:false};

\begin{algorithm}[thb]
	\SetAlgoVlined
	\SetFuncSty{textsf}
	\SetArgSty{textsf}
	\small
	\topcaption{$Query^+$} \label{alg:query_efficient}
	\Input{any two vertices $s,t \in V$, constraint $w$, and current distance $d$;}
	\Output{a boolean value indicating if a path is found}
	\For{$\forall$ vertex $v$ $\in$ $L[t]$}{
	\label{eff:for}
	    \If{$L[s][v] = \emptyset$} {\textbf{continue};}
	    \label{eff:prune}
	    \State{Find $I_i \in L[t][v]$ which is the first label entry with $w_i \geq w$;}\\
	    \label{eff:first1}
	    \State{Find $I_j \in L[s][v]$ which is the first label entry with $w_j \geq w$;}\\
	    \label{eff:first2}
	    \If{$d_i + d_j \leq d$}
	    {\State{\Return $True$;}}
	    \label{eff:true}
	}
	\State{\Return $False$;}\\ 
	\label{eff:false}
\end{algorithm}

\stitle{Efficient Initialization}.~An important aspect is to avoid $O(n)$ time initialization for data structures during each round of BFS. This may develop into a bottleneck. A solution is to set updated values in the array, without recreating the whole array. This can be accomplished by recording which vertices have been processed during the process, and only update them.

\stitle{Further Pruning}.~Whenever a path is found during the query process of index construction, the algorithm records the result for the current quality of that vertex pair. If a potential query in the same BFS round has the same vertex pair and a quality not greater than the recorded quality, the query process can be skipped since its result is recorded.


\subsection{Vertex Ordering Strategies}
Vertex ordering is one of the vital orders that significantly affect indexing time, index size, and querying time. This subsection investigates a hybrid vertex ordering based on some observations.

\begin{observation}
\label{obs:degree}
The degree ordering is shown to have better performance than other orderings~\cite{akiba2013fast} for the shortest path distance problem in the scale-free network, e.g., social networks. Notwithstanding, for the road network ``Indochina"\footnote{\url{http://law.di.unimi.it}}, the tree decomposition based ordering has much better performance. 
\end{observation}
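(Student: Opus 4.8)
The statement is an empirical observation rather than a closed-form theorem, so the plan is to support it on two fronts: a structural argument explaining why each ordering is well matched to its network class, and an experimental protocol confirming the predicted gap. I would not attempt a worst-case proof, since minimum-size hub labeling is already NP-hard (Section~\ref{sect:pre}), so no ordering can be shown optimal in general; the structural arguments below serve as explanations of, not substitutes for, the measurements.

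For the scale-free case I would appeal to the power-law degree distribution of social networks. A small number of high-degree hubs lie on a large fraction of shortest paths, while the small-world property keeps the diameter logarithmic. Placing these hubs earliest in the vertex order $\mathcal{O}$ inserts them into many label sets during the constrained BFS of Algorithm~\ref{alg:index-construction-distance}, so each hub covers many source--target pairs and the pruning test in Line~\ref{cons:pruned1} fires for the remaining low-degree vertices. This keeps $\zeta=\max_{v\in V(G)}|I(v)|$ small, which in turn bounds both the index size and the $O(|L(s)|+|L(t)|)$ query time. The point to verify is that on scale-free graphs degree is a strong proxy for shortest-path coverage, so degree ordering approximates a good hub hierarchy at negligible cost.

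For the road-network case I would instead invoke small treewidth. Road graphs are near-planar with small treewidth but nearly uniform degree, so degree ordering yields essentially no useful hub hierarchy. A tree-decomposition ordering eliminates vertices consistently with a width-$tw$ decomposition, forcing separators of size $O(tw)$ to sit high in $\mathcal{O}$; since the diameter is large, every long shortest path is then broken by a small, high-rank separator, whereas degree-based hubs cannot cover distant pairs. Hence the separator-based ordering produces far smaller labels and faster queries on such graphs, which is precisely what motivates the hybrid ordering. The concrete validation is to run Algorithm~\ref{alg:index-construction-distance} under both orderings on several scale-free graphs and on the road network ``Indochina'', reporting index size, construction time, and query latency in Section~\ref{sect:exp}.

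The main obstacle is twofold. First, the observation is inherently data-dependent: because no ordering is provably optimal, the claim can only be established by a consistent empirical gap across multiple instances rather than derived analytically. Second, computing a good tree-decomposition ordering is itself nontrivial, as exact treewidth is NP-hard; I would therefore use a heuristic elimination scheme (for example minimum-degree or minimum-fill ordering) and rely on the resulting width staying small on road networks for the separator argument to remain valid.
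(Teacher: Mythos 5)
Your proposal matches the paper's treatment: the paper offers no formal proof of this observation, supporting it only by citation to~\cite{akiba2013fast} and~\cite{ouyang2018hierarchy} together with the same structural intuitions you give (high-degree vertices covering many shortest paths in scale-free graphs, and MDE-based tree decomposition exploiting small treewidth in road networks), and by the experiments that motivate the hybrid ordering. You correctly identify the claim as inherently empirical and your proposed validation protocol and heuristic elimination ordering coincide with what the paper actually does.
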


\begin{observation}
\label{obs:tree}
It is shown in ~\cite{ouyang2018hierarchy} that Vertex Hierarchy via Tree Decomposition technique is appropriate for the road network for distance query.
\end{observation}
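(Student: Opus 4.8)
The plan is to justify this observation structurally rather than by a self-contained combinatorial argument, since the claim is fundamentally about \emph{why} a tree-decomposition-derived hierarchy yields small labels---and hence fast distance queries---on road networks in particular. First I would recall the definition of a tree decomposition $(T, \{B_x\})$ of $G$, in which every vertex of $G$ occupies a connected subtree of bags $B_x$ and every edge is covered by some bag, together with the notion of treewidth as one less than the minimum over all such decompositions of the maximum bag size. The Vertex Hierarchy via Tree Decomposition then orders vertices by an elimination order consistent with $T$: vertices in sparse peripheral regions are peeled off first and placed low in the hierarchy, while separator vertices rise to the top. This matches the setting already cited in the preliminaries, where \cite{ouyang2018hierarchy} builds a combined hub-labeling-and-hierarchy scheme \emph{under the assumption of small treewidth and bounded tree height}.

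The key step is to connect the treewidth and tree height to the label size. I would argue that when vertices are ranked by this hierarchy, the label $L(u)$ of any vertex $u$ need only store hubs drawn from the ancestors of $u$ in the decomposition tree, because each bag along the root path of $u$ acts as a vertex separator between $u$ and the remainder of the graph; any shortest path leaving $u$ must cross one of these separator bags. Consequently $|L(u)|$ is bounded by roughly the product of the treewidth and the height of $T$, rather than by $|V|$. Since the query time for a $2$-hop index is linear in the label sizes of the two endpoints (as established for our own $Query^+$ implementation), this bound directly yields fast queries once the labels are provably small.

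The final, and most delicate, step is the empirical structural fact that road networks actually exhibit small treewidth and bounded tree height. I would appeal to the near-planarity and geographic locality of road networks---short separators induced by geography, and highways or arterials acting as thin cut sets---which in practice produce tree decompositions of small width and shallow height, exactly the regime in which the label-size bound above is favourable. Combined with the previous step, this explains why the tree-decomposition hierarchy is appropriate for this class of graphs, in contrast to the degree ordering preferred for scale-free networks in Observation~\ref{obs:degree}.

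The main obstacle is that ``appropriate'' is inherently an empirical, comparative judgement rather than a purely mathematical statement: what one can prove rigorously is only the conditional chain (small treewidth and bounded height $\Rightarrow$ small labels $\Rightarrow$ fast queries), whereas the antecedent---that real road networks genuinely satisfy these width and height bounds while scale-free networks do not---is certified only experimentally, as in \cite{ouyang2018hierarchy}. Hence the honest form of the argument is a structural upper bound parametrized by treewidth and height, whose applicability to road networks is then confirmed by measurement rather than by derivation; I would present the conditional bound formally and defer the treewidth claim to the cited empirical evidence.
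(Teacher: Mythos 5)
Your proposal is correct and matches the paper's treatment: the paper offers no proof of this observation at all---it is justified solely by the citation to \cite{ouyang2018hierarchy}, and your reconstruction (ancestor bags act as separators, so labels are bounded by roughly the product of width and tree height, which yields fast queries under the small-treewidth and bounded-height assumption that the paper itself states in Section~\ref{sect:pre} and Section~\ref{sect:related}) is exactly the mechanism of that cited work. Your explicit acknowledgement that the antecedent---real road networks having small treewidth and shallow decompositions---is an empirical fact certified by measurement rather than derivation is also the honest reading of why this is stated as an \emph{observation} rather than a theorem.
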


To use the Observation~\ref{obs:tree}, it first introduces the degree-based ordering as well as the Vertex Hierarchy through Tree Decomposition.

\noindent \underline{\textit{Degree-Based Scheme}}.~A vertex with a higher degree is likely to cover more shortest paths. In summary, in degree-based ordering, vertices are sorted in non-ascending order of degree. This scheme leads to the state-of-the-art canonical hub labeling for shortest distance queries.

\noindent\underline{\textit{Tree Decomposition Ordering}}. Tree decomposition is a technique for mapping a graph to a tree in order to accelerate the resolution of certain computational problems in graphs ~\cite{halin1976s, robertson1984graph}. Numerous algorithmic problems, such as maximum independent set and Hamiltonian circuits that are NP-complete for arbitrary graphs, can be solved efficiently by dynamic programming for graphs of finite treewidth, employing the tree-decompositions of these graphs. A summary of Bodlaender's introduction can be found in~\cite{bodlaender1994tourist}. The tree decomposition provides a natural hierarchy to vertices. In this paper, tree decomposition is utilized to establish the vertex hierarchy, and demonstrate that the hierarchy is effective in resolving quality constrained distance queries in networks. A tree decomposition of a graph $G(V, E)$ is defined as follows~\cite{bodlaender1994tourist}:
\begin{definition}[Tree Decomposition] A tree decomposition of a graph $G(V,E)$, denoted by $T_G$, is a rooted tree in which each node $X \in V(T_G)$ is a subset of $V(G)$ (i.e., $X \subset V(G)$) with the following three conditions:
\begin{itemize}
    \item $\bigcup_{X \in V(T_G)} X = V$;
    \item For every $(u,v) \in E(G)$, there exists $X \in V(T_G)$ s.t. $u \in X$ and $v \in X$.
    \item For every $v \in V(G)$ the set $\{ X | v \in X \}$ forms a connected subtree of $T_G$.
\end{itemize}
\end{definition}

Based on Observations~\ref{obs:degree} and~\ref{obs:tree}, it simply employed vertex ordering of the Vertex Hierarchy via Tree Decomposition in~\cite{ouyang2018hierarchy} and developed a fast approach to obtain this ordering as opposed to constructing their whole index for the \ourproblem \  problem.

The computation of the treewidth of a graph has been shown to be NP-Complete~\cite{arnborg1987complexity}. One of the most effective heuristics Tree decomposition is based on minimum degree elimination.

\noindent \textbf{Minimum Degree Elimination (MDE)-based Tree Decomposition}.~Minimum Degree Elimination~\cite{berry2003minimum} based tree decomposition removes recursively the vertex $v$ in $G$ with the minimum degree and then adds $v$'s neighbors' clique back to $G$. A
 bag of the tree decomposition is comprised of each node $v$ and its neighbors on the transient graph right before the deletion of $v$.
\begin{definition}[Minimum Degree Elimination]
Generate $n$ bags of nodes $\{ B_1, B_2,..., B_n\}$ and a sequence of nodes $\{ v_1, v_2,..., v_n\}$ in $n$ rounds with the starting graph $G_0 = G$. In the $i-th$ round, $i$ takes value from $1$ to $n$:
\begin{itemize}
    \item $v_i$ : the node with the lowest degree (or any one of these nodes if there is a tie situation) in $G_{i-1}$.
    \item $N_i$ : the neighbor set of $v_i$ in $G_{i-1}$.
    \item $B_i$ : $\{ v_i\} \cup N_i$.
    \item $G_i$ : a graph that eliminates $v_i$ from $G_{i-1}$ and then adds clique($N_i$), that is $V(G_i)$ $= V(G_{i-1} \backslash \{ v_i \}$, and $E(G_i) = E(G_{i-1})$ $\cup$ $E[clique(N_i)]$ $\backslash \{ N_i \} \times N_i$. 
\end{itemize}
\end{definition}

\noindent\underline{\textit{Hybrid Vertex Ordering}}.~Therefore, this paper proposes a hybrid vertex ordering that compromises between the computational efficiency of degree vertex order and the index size effectiveness of the tree decomposition order as follows:

\begin{itemize}
    \item \textit{Classification}.~All vertices are classified into two categories: core part and periphery. To achieve this, a degree threshold $\delta$ is specified. If a vertex $v$'s degree is above this threshold, it is classified into the core-part. Otherwise, it is classified into the periphery.
    \item \textit{Core-Part}.~Regarding the core-part vertices, it is observed that the computation cost can be quite high if the tree decomposition method is used. Therefore, all these vertices are ordered according to their degree.
    \item \textit{Periphery}.~The vertices in periphery are ranked according to tree decomposition order.
    \item \textit{Combinations}.~Then, these two types of vertices are combined to produce a hybrid vertex order.
\end{itemize}
\section{Variants and extensions}
\label{sect:ext}

\noindent \textbf{Quality Constrained Shortest Path}.~Similar to~\cite{akiba2013fast}, to locate the exact shortest path rather than the distance, the modified algorithm records sets of quads instead of triples of labels. Let $L(v)$ be a set of quads of $(u, d_u, w_u, p_{uv})$, where $p_{uv} \in V$ is the last edge visited before inserting this label entry in the index construction search process of Algorithm~\ref{alg:index-construction-distance} starting from $u$. It can restore the shortest path between $v$ and $u$ by ascending the last edge from $v$ to the parents.

\noindent \textbf{Directed and Weighted Graphs}.~To modify \ours \ to a directed graph, the only modification required is to conduct a constrained constrained BFS from two directions for each vertex. In addition,  $L_{in}$ and $L_{out}$ are required to hold the index data for in-coming edges and out-coming edges, respectively. It is necessary to traverse all the index entries in $L_{out}(s)$ and $L_{in}(t)$ for a $query(s, t, w)$. In cases where the length of an edge is not $1$ (e.g., weighted graph), we can convert the constrained BFS to a constrained Dijkstra.




\section{Experimental Evaluations}
\label{sect:exp}

\vspace{1mm}
\noindent\textbf{Datasets}.~Tables~\ref{tab:summary_of_road_networks} and~\ref{tab:summary_of_social_networks} provide the statistics of real graphs used in the experiments. 14 publicly available datasets are used. These datasets can be downloaded from either KONECT\cite{kunegis2013konect}\footnote{\url{http://konect.uni-koblenz.de}} or SNAP~\cite{leskovec2011stanford}\footnote{\url{https://snap.stanford.edu}}. Directed graphs were converted to undirected ones in our testings. \rev{For labeled graphs such as Movielens, $|w|$ is directly taken from the original data-set. For other non-labeled graphs, we randomly generate those weights.} For query performance evaluation, 10,000 random queries were employed and the average time is reported.

\vspace{1mm}
\noindent\textbf{Settings}.~In experiments, all programs were implemented in standard c++11 and compiled with g++4.8.5.
\\
All experiments were performed on a machine with 20X Intel Xeon 2.3GHz and 385GB main memory running Linux(Red Hat Linux 7.3 64 bit).

\begin{table}[thb]
    \centering
    \small
    \topcaption{Summary of Road Networks} \label{tab:summary_of_road_networks}
    \scalebox{0.96}{
    \begin{tabular}{|c||c|c|c|c|c|} \hline
        \textbf{Name} & \textbf{Dataset} & $|V(G)|$ & $|E(G)|$ \\ \hline
        NY &New York City& 264,346 & 733,846 \\ \hline
        FLA & Florida & 1,070,376 & 2,712,798 \\ \hline
        CAL & California and Nevada & 1,890,815 & 4,657,742 \\ \hline
        E & Eastern USA & 3,598,623 & 8,778,114 \\ \hline
        W & Western USA & 6,262,104 & 15,248,146 \\ \hline
        CTR & Central USA & 14,081,816 & 34,292,496 \\ \hline
        USA & Full USA & 23,947,347 & 58,333,344 \\ \hline
    \end{tabular}}
\end{table}

\begin{table}[thb]
    \centering
    \small
    \topcaption{Summary of Social Networks} \label{tab:summary_of_social_networks}
    \scalebox{0.96}{
    \begin{tabular}{|c||c|c|c|c|} \hline
        \textbf{Name} & \textbf{Dataset} & $|V(G)|$ & $|E(G)|$ & $|w|$ \\ \hline
        MV-10 & Movielens-10m & 80,555 & 10,000,054 & 5 \\ \hline
        EU & eu-2005 & 862,664 & 16,138,468 & 3 \\ \hline
        ES & eswiki-2013 & 970,331 & 21,184,931 & 3 \\ \hline
        MV-25 & Movielens-25m & 221,588 & 25,000,095 & 5 \\ \hline
        FR & frwiki & 1,350,986 & 31,037,302 & 3 \\ \hline
        UK & uk-2007 & 1,000,000 & 37,061,970 & 3 \\ \hline
        SO-Y& Stackoverflow (year) & 2,601,977 &  28,183,518& 9 \\ \hline
    \end{tabular}}
\end{table}

\begin{figure}[htbp]
    \centering
    {\includegraphics[width=0.8\linewidth]{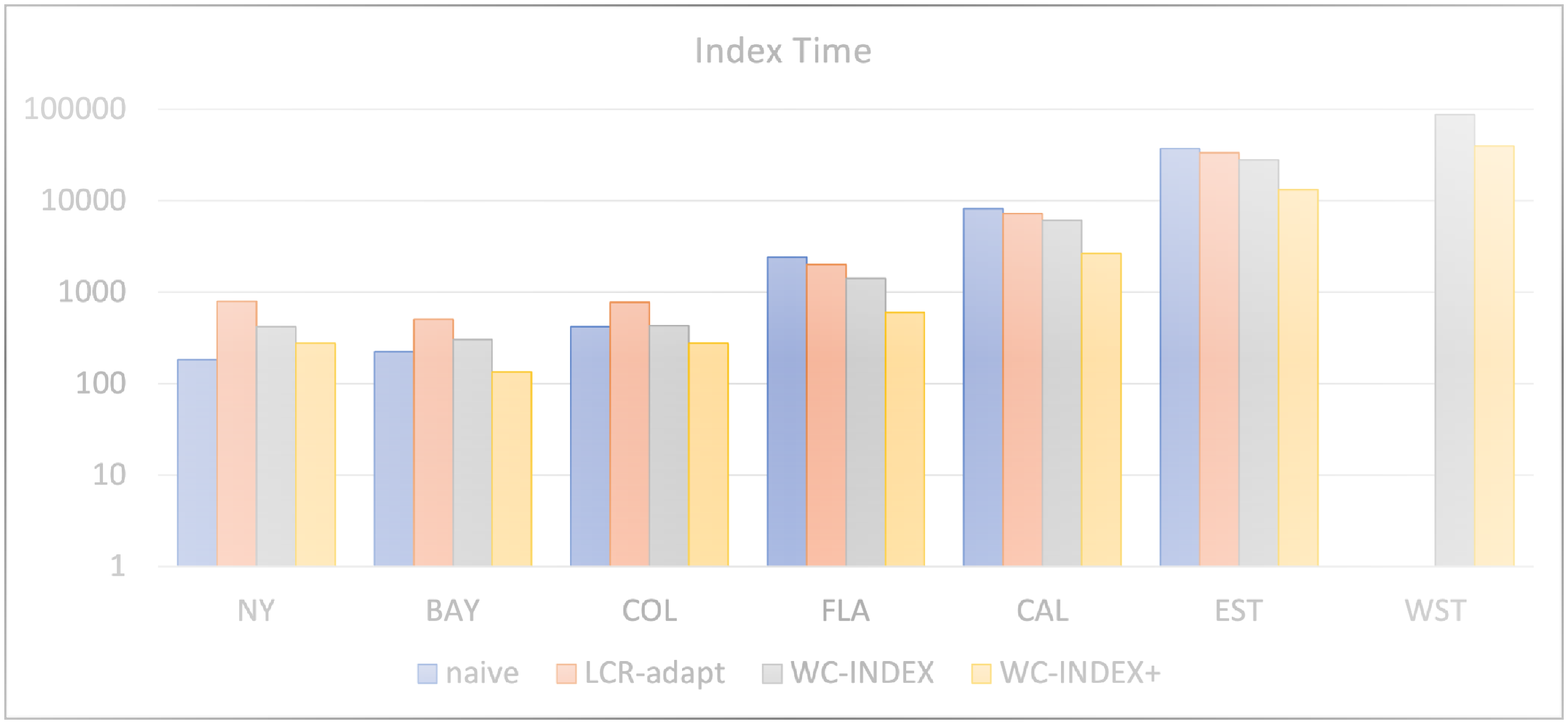}}
        \vspace{-1mm}
    \caption{Indexing Time (s) for baseline, \ours, and \ourp.}
    
    \label{fig:index_time}
    \vspace{-1mm}

\end{figure}

\begin{figure}[htbp]
    \centering
    {\includegraphics[width=0.8\linewidth]{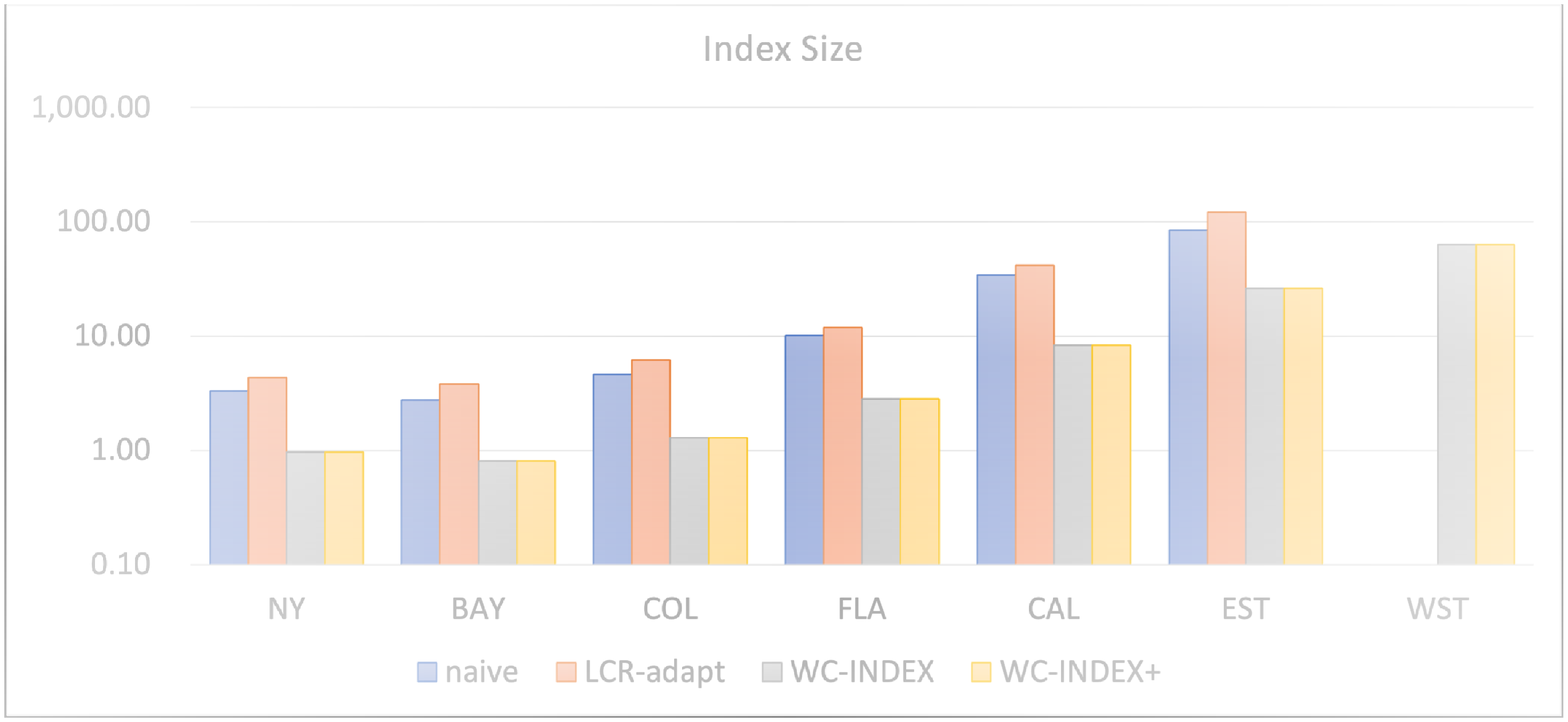}}
        \vspace{-1mm}

    \caption{Indexing Size (GB) for baseline, \ours, and \ourp.}
    \label{fig:index_size}
        \vspace{-1mm}

\end{figure}


\begin{figure*}[htbp]
    \centering
    {\includegraphics[width=1\linewidth]{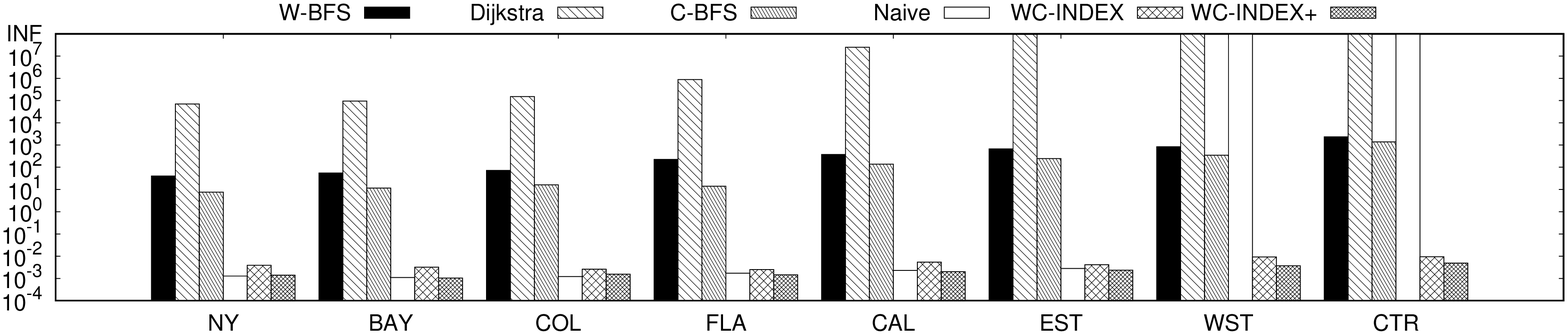}}
        \vspace{-3mm}

    \caption{Querying time (ms) for baselines, \ours, and \ourp.}
    \label{fig:query_time}
        \vspace{-3mm}

\end{figure*}

\vspace{1mm}
\noindent\textbf{Algorithms}~We compare our techniques with the following baseline solutions.
\begin{itemize}
    \item \textbf{W-BFS}.~The original graph is partitioned into $|w|$ parts, and then conduct BFS.
    \item \textbf{Dijkstra}.~After the partitioning of the original graph into $|w|$ parts, Dijkstra is conducted. 
    \item \textbf{C-BFS}.~It conducts Constrained BFS on the original graph, with the valid edges explored.
    \item \textbf{Na\"ive}.~The na\"ive 2-hop labeling method introduced in Section~\ref{sect:baselines}.
    \item \rev{\textbf{LCR-adapt}.~We modify the state-of-the art \underline{L}abel \underline{C}onstrained \underline{R}eachability algorithm to our problem.}
    \item \textbf{\ours}.~The basic algorithm for the quality-constrained shortest path problem.
e    \item \textbf{\ourp}.~The advanced algorithm with the query-efficient and hybrid order techniques.
\end{itemize}


\noindent \textbf{Exp 1: Indexing Time for Road Networks}.~Figure~\ref{fig:index_time} illustrates the indexing time for Na\"ive 2-hop labeling index, \ours, and \ourp. What stands out in these figures is that \ourp \ is the fastest method to construct the index among these three algorithms. For instance, for CTR, only \ourp \ can construct the 2-hop index. As for Na\"ive and \ours, \ours \ is slower than Na\"ive in small datasets, e.g., NY, BAY, COL, EST. Notwithstanding, \ours \ is much faster than Na\"ive for large datasets, e.g., WST and CTR. \rev{We observed that for smaller graphs, the construction overhead of WC-INDEX dominates the index construction time. As a result, WC-INDEX builds up slower than the baseline index. On the other hand, na\"ive index simply filter the graph based on every possible weight and construct simple 2-hop indexes for each filter graph. When the graphs are small, this can be done relatively quickly compared to WC-INDEX. However, as the size of the graphs gets large, building indexes for every separate filtered sub-graph is costing much more time, and is eventually outperformed by WC-INDEX, which only constructs one index.}

\noindent \textbf{Exp 2: Indexing size for Road Networks}. Figure~\ref{fig:index_size} depicts the index size for Na\"ive 2-hop labeling index, \ours, and \ourp. What is striking in this figure is that \ours \ and \ourp \ could achieve the same index size. The reason is that they use the same vertex ordering, and the Query-Efficient technique can only speed up the construction process, but does not have any impact on the index size. As for Na\"ive, its index size is the largest among these three in all datasets. \rev{Table \ref{tab:size_of_road_networks} summarizes the memory usage of storing the road networks.}

\begin{table}[thb]
    \centering
    \small
    \topcaption{Size of Road Networks} \label{tab:size_of_road_networks}
    \scalebox{0.96}{
    \begin{tabular}{|c||c|c|} \hline
        \textbf{Name} & \textbf{Dataset} & Size (GB) \\ \hline
        NY &New York City& 0.006 \\ \hline
        FLA & Florida & 0.025 \\ \hline
        CAL & California and Nevada & 0.043 \\ \hline
        E & Eastern USA & 0.082\\ \hline
        W & Western USA & 0.142\\ \hline
        CTR & Central USA & 0.319\\ \hline
        USA & Full USA & 0.54\\ \hline
    \end{tabular}}
\end{table}





\noindent \textbf{Exp 3: Query Time for Road Networks}. Figure~\ref{fig:query_time} demonstrates the query time for W-BFS, Dijkstra, C-BFS, Na\"ive, \ours, and \ourp. An interesting obervation is that Dijkstra is the slowest among all the algorithms. It is evident from Figure~\ref{fig:query_time} that W-BFS and C-BFS have comparable query efficiency. C-BFS is more efficient than W-BFS in terms of query time. These two BFS-based online algorithms can commit on all the datasets. \rev{The reason that Dijsktra is slower than BFS is that it reserved the distance priority queue and a distance vector $D[v]$ to store all the distance information to the start vertex $s$. Additional, once a new distance from $s$ to $v$ is found, it would be compared with $d[v]$. $d[v]$ would be updated if a shorter one is found. With these additional operations, the W-BFS would run faster than Dijskstra, but Dijsktra could directly cope with the case where edge distance is not $1$. Nevertheless, W-BFS could not directly address such a case.}
The query time for the index-based technique is substantially smaller than the online search based method. On average, 4-5 orders of magnitudes speedup can be achieved. Nevertheless, the Na\"ive 2-hop labeling index can not be constructed for CTR and WST, hence the query time for these two datasets is set as $INF$. As with \ours \  and \ourp, they can be constructed in all datasets with a feasible index size, indexing time, and query time in microseconds. \rev{For very large road networks such as WST and CTR, the na\"ive indexing cannot constructed due to memory constraint, since the method builds separate indices for each $w$. As a result, the query time cannot be tested and thus listed as infinity.}

\begin{figure*}[htbp]
    \centering
    {\includegraphics[width=0.8\linewidth]{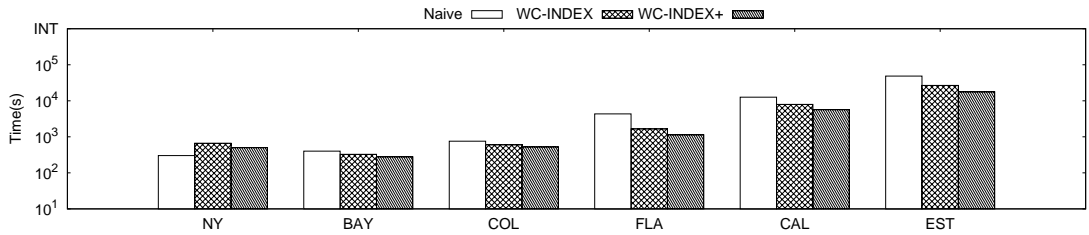}}
        \vspace{-3mm}

    \caption{Indexing time (s) for baseline, \ours, and \ourp, when $|W| = 20$.}
    \label{fig:index_time20}
        \vspace{-3mm}

\end{figure*}

\begin{figure*}[htbp]
    \centering
    {\includegraphics[width=0.8\linewidth]{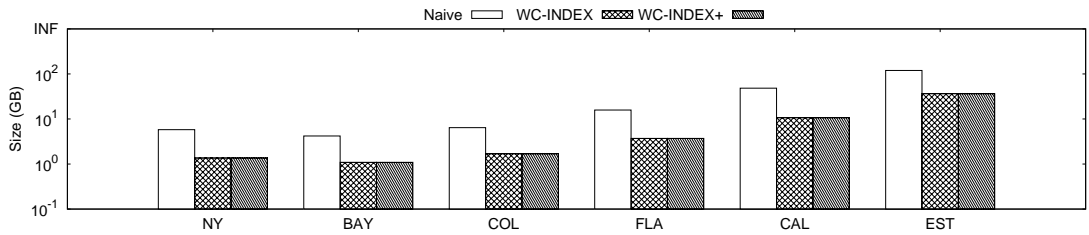}}
        \vspace{-3mm}

    \caption{Indexing size (GB) for baseline, \ours, and \ourp, when $|W| = 20$.}
    \label{fig:index_size_20}
        \vspace{-3mm}

\end{figure*}


\noindent \textbf{Exp 4: Large $|w|$}.~Exp 4 investigates the indexing time and indexing size for the number of different constraint values $|w| = 20$. Figures~\ref{fig:index_time20} and~\ref{fig:index_size_20} reports the findings. The results are similar to that in Exp 1 and 2. Regarding indexing time, Figure~\ref{fig:index_time20} reveals that \ourp \ is the fastest method among these three to construct the index. Regarding Na\"ive and \ours, \ours \ is slower than Na\"ive across all datasets evaluated, i.e., NY, BAY, COL, EST. As for indexing size, what is striking in Figure~\ref{fig:index_size_20} is that \ours \ and \ourp \ can achieve the same index size. The reason for this is because they both employ the same vertex ordering, and the Query-Efficient technique can only speed up the construction process, without affecting on the index size. As for Na\"ive, its index size is the largest among these three in all datasets.

\begin{figure*}[htbp]
    \centering
    {\includegraphics[width=0.8\linewidth]{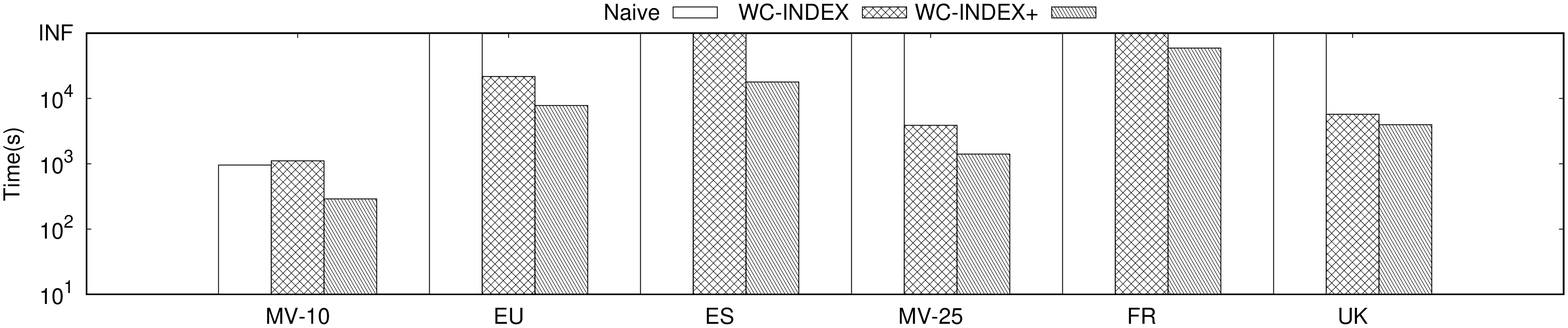}}
    \caption{Indexing Time (s) for baseline, \ours, and \ourp.}
    \label{fig:index_time_social}
        \vspace{-3mm}

\end{figure*}

\begin{figure*}[htbp]
    \centering
    {\includegraphics[width=0.8\linewidth]{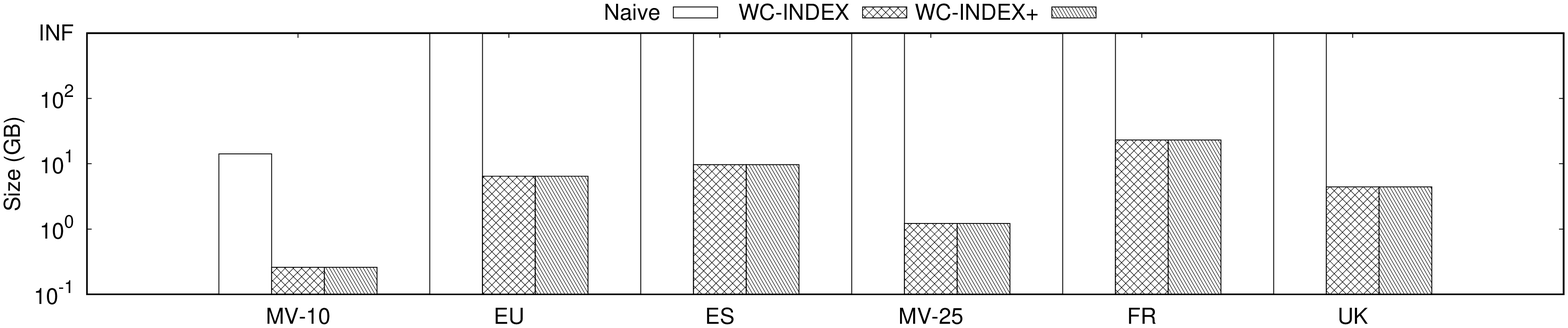}}
    \caption{Indexing Size (GB) for baseline, \ours, and \ourp.}
    \label{fig:index_size_social}
        \vspace{-3mm}

\end{figure*}

\begin{figure*}[htbp]
    \centering
    {\includegraphics[width=0.8\linewidth]{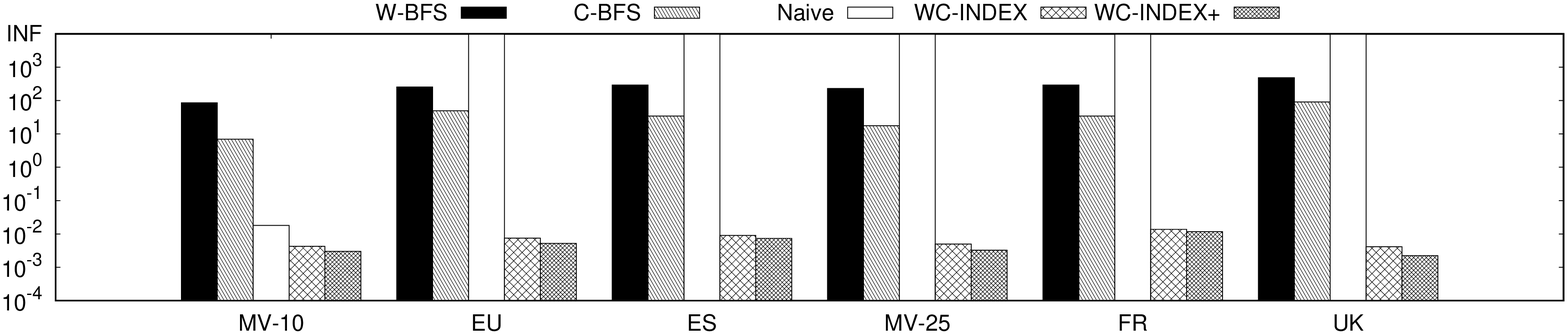}}
    \caption{Querying time (ms) for baselines, \ours, and \ourp.}
    \label{fig:query_time_social}
        \vspace{-3mm}

\end{figure*}

\noindent \textbf{Exp 5: Indexing Time, Size, and Query Time for Social Networks}.~Exp 5 evaluates the indexing time, size and query time for social networks. As shown in Figures~\ref{fig:index_time_social},~\ref{fig:index_size_social}, and~\ref{fig:query_time_social}, the patterns resemble those of road networks. It is interesting to notice that the indexing time and size over social networks are larger than that of road networks since social networks have a higher average degree. For the query time, this experiment does not consider the Dijkstra since the edge is unweighted and thus it is the same as W-BFS in the social networks. The query times of \ours, and \ourp \ are much faster than that of Na\"ive method.
\rev{Table \ref{tab:size_of_social_networks} summarizes the memory usage of storing the social networks.}

\begin{table}[thb]
    \centering
    \small
    \topcaption{Size of Social Networks} \label{tab:size_of_social_networks}
    \scalebox{0.96}{
    \begin{tabular}{|c||c|c|} \hline
        \textbf{Name} & \textbf{Dataset} & $|E(G)|$ \\ \hline
        MV-10 & Movielens-10m & 0.093\\ \hline
        EU & eu-2005 & 0.15\\ \hline
        ES & eswiki-2013 & 0.21\\ \hline
        MV-25 & Movielens-25m & 0.23 \\ \hline
        FR & frwiki & 0.29  \\ \hline
        UK & uk-2007 & 0.34 \\ \hline
        SO-Y& Stackoverflow (year) & 0.26\\ \hline
    \end{tabular}}
\end{table}





\section{Related Work}
\label{sect:related}

\stitle{Weight Constrained Shortest Path.} Given a directed graph $G$, and two vertices $s,t\in V(G)$, the WCSP aims to find a path $p$ between $s$ and $t$ such that the cost of $p$ (i.e., $c(p) = \sum_{e\in p} c(e)$) is minimized and the quality of $p$ (i.e., $w(p) = \sum_{e\in p} w(e)$) is less than a given threshold $W$. This problem is proved to be NP-hard \cite{smith2012solving,dumitrescu2003improved,qin2020software}. Our quality constraint shortest distance problem is inherently different as the constraint is imposed over each individual edge. 

\stitle{Label Constraint Shortest Path.} Label-constraint shortest path \cite{bonchi2014distance} returns the length of a shortest path over all paths that satisfy the predefined label sets (e.g., all labels of the edges in the path belong to the predefined label set). Likewise, there are some variants of this problem, e.g., Language constrained shortest path \cite{rice2010graph}, regular language constrained shortest path \cite{rice2010graph}, and some others in \cite{zhang2019correlation,shi2022indexing}. Nevertheless, they are different from our quality constraints. Thus, both weight and label constrained shortest path algorithms are not considered in this paper.

\noindent \textbf{Graph Search for Distance Queries.}~Both breadth-first search and Dijkstra's algorithm are classic algorithms for shortest path problems. Instead of Dijkstra's algorithm, the ALT algorithm~\cite{goldberg2005computing} employs A* search with a \textit{landmark}-based heuristic to speed up query processing. The notion of \textit{vertex reach} is proposed to reduce the search space for Dijkstra's algorithm in~\cite{gutman2004reach}. In the approaches that are based on \textit{arc}-flag~\cite{hilger2009fast}, a graph is partitioned into $k$ regions and each arc $(u,v)$ is associated with a $k$-bit flag of which the i-th bit indicates if there is a shortest path from $u$ to the $i$-th region via $(u,v)$. Based on the arc-flags, the search space of Dijkstra's algorithm can also be greatly reduced. The notation of \textit{highway hierarchy} (HH)~\cite{sanders2005highway} is designed to capture the natural hierarchy of road networks so that queries can be answered by searching the sparse high levels of HH, reducing the search space. 
\cite{geisberger2008contraction} introduced \textit{contraction hierarchy} (CH), in which, different from HH, each level consists of only one vertex. Its efficiency relies heavily on the noiton of \textit{shortcut}, which is to presere the distance between vertices after less important vertices are removed.

\noindent \textbf{Hub Labeling for Distance Queries.}~Another important class of algorithms for distance evaluation is hub labeling~\cite{cohen2003reachability}. In this class, a label $L(v)$ is computed for each vertex $v$ such that the distance between two vertices $s$ and $t$ can be obtained by inspecting $L(s)$ and $L(t)$ only, without searching the graph. In general, it is NP-hard to construct a labeling with the minimum size~\cite{cohen2003reachability}. In~\cite{abraham2011hub,abraham2012hierarchical}, efficient hub labelings for road networks are discussed. A labeling scheme that instead uses paths as hubs is presented in~\cite{akiba2014fast}. 

In~\cite{ouyang2018hierarchy}, under the assumption of small treewidth and bounded tree height, a scheme combining both hub labeling and hierarchy is proposed for road networks. For real graphs that are scale-free, pruned landmark labeling (PLL)~\cite{akiba2013fast} is the state-of-the-art and many extensions have been devised. For example, an external algorithm generating the same set of labels is proposed in~\cite{jiang2014hop}; a parallel algorithm is devised in~\cite{li2019scaling}; and~\cite{akiba2014dynamic} shows an algorithm to update the labels when new edges are inserted into the graph. 



\section{Conclusion and Future Works}
\label{sect:conclusion}
The shortest path is a fundamental concept in graph analytics. Existing works mainly focus on the distance computer of shortest paths. Nevertheless, finding a shortest path between $s$ and $t$ with a quality constraint along each edge is an important problem in many applications. To bridge this research gap, this paper presents a 2-hop labeling based solution to answer quality constrained shortest distance queries. Our techniques support query processing over large-scale graphs in real-time. 

\noindent \textbf{Future Works}.~
\rev{\textbf{re-indexing/dynamic index.} Effectively managing the index on dynamic graphs can be a future research direction. Here, we point out a plausible direction for extending our index to a dynamic scene. To handle edge insertion and deletion, a set of affected vertices can be computed and updates in the
index can be performed only on affected entries cause by the edge insertion/deletion. How to effectively compute affected vertices will be the focus of future research. Potential solutions are to utilize existing index entries instead of conducting constrained BFS for the edge inserted/deleted.}

\balance

\bibliographystyle{ieeetr}
\bibliography{cit}

\end{document}